\documentclass[11pt,reqno]{amsart}
\allowdisplaybreaks[4]
\usepackage{graphicx} 
\usepackage{subfigure}
\usepackage{tcolorbox}
\usepackage{epsfig}
\usepackage{amssymb}
\usepackage{amsmath,xy}
\usepackage{cite,color}
\newtheorem{theorem}{Theorem}

\newtheorem{proposition}{Proposition}

\newtheorem{corollary}{Corollary}

\newtheorem{lemma}{Lemma}
\newtheorem{remark}{Remark}
\newtheorem{example}{Example}
\newcommand{\be}{\begin{equation}}
\newcommand{\ee}{\end{equation}}
\newcommand{\bea}{\begin{eqnarray}}
\newcommand{\eea}{\end{eqnarray}}
\newcommand{\ba}{\begin{array}}
\newcommand{\ea}{\end{array}}
\newcommand{\bean}{\begin{eqnarray*}}
\newcommand{\eean}{\end{eqnarray*}}

\newcommand{\pa}{\partial}

\begin{document}

\title{The generalized Wronskian solutions of the constrained mKP hierarchy}
\author{Jiayi Zhang$^1$, Jipeng Cheng$^{1,2*}$, Wenchuang Guan$^1$}
\dedicatory {$^{1}$ School of Mathematics, China University of
Mining and Technology, \ Xuzhou, Jiangsu 221116, China,\\
$^{2}$ Jiangsu Center for Applied Mathematics (CUMT), \ Xuzhou, Jiangsu 221116, China.
}

\thanks{*Corresponding author. Email: chengjp@cumt.edu.cn;\ chengjipeng1983@163.com.}

\begin{abstract}
In this paper, we investigate the $(k, m)$-constrained 1st modified Kadomtsev-Petviashvili (mKP) hierarchy $(L^k)_{\leqslant 0}= \sum_{i=1}^m q_i \pa^{-1} r_i \pa$. Here, we obtain the corresponding solutions in the form of generalized Wronskians, which include the Wronskians and Grammians as special cases. Most importantly, these generalized  Wronskian solutions are proved to satisfy the bilinear equations of the $(k, m)$-constrained mKP hierarchy, which is generally nontrivial. Our results here will be helpful in the derivation of the more general addition formulae and polynomial solutions for the 1st mKP hierarchy.\\ \textbf{Keywords}: constrained mKP hierarchy; generalized Wronskians; tau functions; bilinear equations.\\
{\bf MSC 2020}: 35Q53, 37K10, 35Q51 \\
	{\bf PACS}: 02.30.Ik
\end{abstract}

\maketitle

\section{Introduction}
\subsection{The 1st mKP hierarchy}
The Kadomtsev-Petviashvili (KP) hierarchy \cite{Mulase1994,Kac2023,vanMoerbeke1994,DJKM,Dickey2003,Harnad2021} has achieved great success in theoretical and mathematical physics. As one important generalization of the KP hierarchy, the $(l-l')$-th modified KP (mKP) hierarchy is given by the following bilinear equation \cite{Jimbo,Harnad2021,Kac2018,Dickey1999}:
 \begin{align*}
{\rm Res}_z z^{l-l'}\tau_l(t-[z^{-1}])\tau_{l'}(t'+[z^{-1}])
e^{\xi(t-t',z)}=0,\quad l\geq l',
 \end{align*}
where ${\rm Res}_z\sum_ia_iz^i=a_{-1}$, $t=(t_1=x,t_2,\ldots)$,  $\xi(t,z)=\sum_{i\geq 1}t_iz^i$ and $[z^{-1}]=(z^{-1}, z^{-2}/2,z^{-3}/3,\ldots)$.\ Note that the 0-th mKP hierarchy is the usual KP hierarchy\cite{DJKM,Dickey2003}, so all $\tau_l(t)$ are KP tau functions. The $(l-l')$-th mKP hierarchy is also known as the discrete KP hierarchy \cite{Adler1999,Dickey1999,Cao2026}, which can be used to describe the KP Darboux orbits \cite{Yang2022,Willox2004}. Moreover, the multi-component generalizations of the mKP hierarchy can be found in \cite{Zabrodin2019,Cao2026}.

Here in this paper, we will be more interested in the 1st mKP hierarchy, which is the simplest mKP hierarchy, satisfying the bilinear equation as follows \cite{Kac2018,cheng2018kp}:
\begin{align}\label{mkpbilnear1}
{\rm Res}_z z\tau_1(t-[z^{-1}])\tau_{0}(t'+[z^{-1}])
e^{\xi(t-t',z)}=0.
\end{align}
And it can be found that \eqref{mkpbilnear1} is equivalent to the following bilinear equation \cite{cheng2018kp,Kac2018}:
\begin{align}\label{mkpbilnear2}
{\rm Res}_z z^{-1}\tau_0(t-[z^{-1}])\tau_1(t'+[z^{-1}])
e^{\xi(t-t',z)}=\tau_0(t')\tau_1(t).
\end{align}
The 1st modified KP hierarchy describes the relation between the KP and the transformed KP hierarchies after a 1-step Darboux transformation\cite{Yang2022,Zabrodin2026,Kac2018}, which can be used to describe the generating functions of open and closed intersection numbers\cite{Alexandrov2015}. In what follows, when we refer to the mKP hierarchy, it usually means the 1st mKP hierarchy. If $\tau_0$ and $\tau_1$ satisfy \eqref{mkpbilnear1} or \eqref{mkpbilnear2}, then $(\tau_0,\tau_1)$ is called the mKP tau pair.

If define the mKP wave function $w(t,z)$ and the mKP adjoint wave function $w^*(t,z)$ as follows \cite{cheng2018kp,Zabrodin2026},
\begin{align}\label{adwavefunction}
w(t,z)=\frac{\tau_0(t-[z^{-1}])}{\tau_1(t)}e^{\xi(t,z)},\  w^*(t,z)=\frac{\tau_1(t+[z^{-1}])}{z\tau_0(t)}e^{-\xi(t,z)},
\end{align}
then \eqref{mkpbilnear2} can be rewritten as
\begin{align*}
{\rm Res}_{z}w(t,z)w^*(t',z)=1.
\end{align*}
Further if introduce the dressing operator $W=e^\alpha+\sum_{i=1}^{+\infty}\beta_i\pa^{-i}$ with $\pa=\pa_x$, such that $w(t,z)=W\left(e^{\xi(t,z)}\right)$, then we can find
$w^*(t,z)=(W^{-1}\pa^{-1})^*\left(e^{-\xi(t,z)}\right)$
and $ W_{t_n}=-\Big(W\pa^nW^{-1}\Big)_{\leq 0}W$. Here for $A=\sum_ia_i\pa^i$, we denote $ A_{\geq j}=\sum_{i\geq j} a_i\pa^i$, $A_{<j}=\sum_{i<j}a_i\pa^i$, $A_{[j]}=a_j$, and the symbol $*$ denotes the adjoint operation: $(\sum_ia_i\pa^i)^*=\sum_i(-1)^i\pa^ia_i$. So if introduce the mKP Lax operator  $L=W\pa W^{-1}$, then $L$ has the following form $L=\pa+\sum_{i=0}^{+\infty} v_i\pa^{-i}$
satisfying the following Lax equation
\begin{align}\label{Lax equation}
L_{t_n}=[(L^n)_{\geq 1},L].
\end{align}
Notice that $(L^n)_{\geq 1}$ is firstly used by Kupershmidt and Kiso \cite{kiso1990,kupershmidt1985} in the construction of the evolution equations, thus the mKP hierarchy \eqref{Lax equation} is also called the Kupershmidt-Kiso mKP hierarchy\cite{cheng2018,cheng2018kp}.
\subsection{The constrained mKP hierarchy}
Here we will focus on an important reduction of the mKP hierarchy: the $(k,m)$-constrained mKP hierarchy, defined by \cite{Kun1995,Oevel1999JP,cheng2018kp}:
\begin{align}
&(L^k)_{\leq 0}=\sum_{i=1}^{m}q_i\pa^{-1}r_i\pa,\quad L_{t_n}=[(L^n)_{\geq 1},L],\label{mKPLK<1}\\
&q_{j,t_n}=(L^n)_{\geq1}(q_j),\quad r_{j,t_n}=-\big(\pa^{-1}(L^n)^*_{\geq1}\pa\big)(r_j),\quad 1\leq j\leq m. \label{mKPeigennew}
\end{align}
This system contains the Chen-Lee-Liu equation, the Gerdjikov-Ivanov equation, the derivative NLS equation, and so on (see\cite{Kun1995}). There exist many important results (e.g.\cite{Chen20193,Chen2020,Kun1995,Oevel1999JP,wang2026,wu2022MMMA,Liu1995}) on the constrained mKP hierarchy. In particular, the $(k,m)$-constrained mKP hierarchy \eqref{mKPLK<1} and \eqref{mKPeigennew} can be expressed by the following bilinear equations in terms of the mKP wave functions\cite{wu2022MMMA,Chen2020}
\begin{align}
&{\rm Res}_zz^kw(t,z)w^*(t',z)=\sum_{i=1}^{m}q_i(t)r_i(t'),\\
&{\rm Res}_zw(t',z){\Omega}(q(t)_x, w^*(t,z))=q_i(t)-q_i(t'),\label{resequation1}\\
&{\rm Res}_zw^*(t',z) \Omega(w(t,z), r(t)_x)=r_i(t)-r_i(t'),\quad 1\leq i\leq m,\label{resequation2}
\end{align}
where  $\Omega(f, g)$ is the squared eigenfunction potential \cite{cheng2018kp,Oevel1998JP} defined by 
\begin{align}\label{omegafg}
\Omega(f, g)_{t_n}= \mathrm{Res}_\pa\left(\pa^{-1}g A_n f \pa^{-1}\right).
\end{align}
Here $f_{t_n}=A_n(f),\ g_{t_n}=A_n^*(g)$ and $A_n$ is a differential operator satisfying
$A_{n,t_m}-A_{m,t_n}-[A_m,A_n]=0$. And the integral constants of $\Omega(f, g)$ in \eqref{resequation1} and \eqref{resequation2} are fixed by the way below:
\begin{align}
&\Omega(q(t)_x,w^*(t,z))=-\Big(\pa^{-1}q(t)_x\pa^{-1}W^{-1*}	\Big)(e^{-\xi(t,z)}),\label{omeq}\\
&\Omega(w(t,z), r(t)_x)=\Big(\pa^{-1}r(t)_xW	\Big)(e^{\xi(t,z)}).\label{omer}
\end{align}
We firstly compute the pseudo-differential operators before $e^{\pm\xi(t,z)}$ in \eqref{omeq}\eqref{omer} and use the fact $\partial^{l}(e^{\pm\xi(t,z)})=(\pm z)^le^{\pm\xi(t,z)}$ to fix the integral constants of $\Omega(f, g)$ in \eqref{resequation1} and \eqref{resequation2}. If further introduce
\begin{align}\label{tau2i-1i}
\tau_{2,i}(t)= q_i(t)\tau_1(t),\  \tau_{-1,i}(t)= r_i(t)\tau_0(t),\quad 1\leq i\leq m,
\end{align}
the $(k,m)$-constrained mKP hierarchy is equivalent to the following bilinear equations in terms of tau functions \cite{Chen2020,wu2022MMMA,wang2026}
\begin{align}
&\mathrm{Res}_z  z^{k-1}
\tau_0\left(t - [z^{-1}]\right)\tau_1\left(t' + [z^{-1}]\right)e^{\xi(t-t',z)}=\sum_{i=1}^{m}\tau_{2,i}(t)\tau_{-1,i}(t'), \label{kmcmkp1}\\
&\mathrm{Res}_z \left( z^{-1}\tau_{a-2,i}(t - [z^{-1}])\tau_{a,i}(t' + [z^{-1}])e^{\xi(t-t',z)} \right)=\tau_{a,i}(t)\tau_{a-2,i}(t'), \ a = 1,2,\label{kmcmkp2}\\
&\mathrm{Res}_z \left( z^{-1}\tau_0(t - [z^{-1}])\tau_1(t'+ [z^{-1}])e^{\xi(t-t',z)} \right)=\tau_1(t)\tau_0(t').\label{kmcmkp3}
\end{align}

In \cite{Chen2020}, the boson-fermion correspondence is used to obtain a set of solutions of the $(k,m)$-constrained mKP hierarchy \eqref{kmcmkp1}-\eqref{kmcmkp3}.
However, these solutions depend on the special choice of group elements, that is
$g=\exp\Big(\sum_{i,j\in \mathbb{Z}} a_{i,j}\psi_i^+\psi_j^-\Big) $ satisfying the condition below
\begin{align}\label{condition}
g^{-1}\Gamma_k g = \sum_{i,j\in\mathbb{Z}} f_{i,j}\psi_i^+\psi_j^-,\quad \Gamma_k = \sum_{i\in\mathbb{Z}} \psi_i^+ \psi_{i+k}^-,
\end{align}
where $f_{i,j} = \sum_{l=1}^m d_i^{(l)} e_j^{(l)}$ for $i \geq 0$, $j \leq 0$, and  $d_i^{(l)}, e_j^{(l)}$ are some constants. And $\psi_i^\lambda$ $(i\in\mathbb{Z}, \lambda=\pm)$ are the charged free fermions satisfying $\psi_i^\lambda \psi_j^\mu+\psi_j^\lambda \psi_i^\mu=\delta_{\lambda+\mu,0}\delta_{i,j}\ (\lambda,\mu=\pm)$. Notice that the condition \eqref{condition} is quite strict and difficult to seek such group elements $g$, thus it is not convenient to construct the corresponding solutions for the constrained mKP hierarchy. In this paper, we will use the mKP Darboux transformations to construct generalized Wronskian solutions of the $(k,m)$-constrained mKP hierarchy, that is, to find the generalized Wronskian form $IW_{N,M}$ \cite{He2002} of $\tau_i(-1\leq i\leq 2)$ to satisfy bilinear equations \eqref{kmcmkp1}-\eqref{kmcmkp3}, defined by:
\begin{equation*}
IW_{N,M}(\psi_{\overrightarrow{N}}; \phi_{\overleftarrow{M}})\triangleq\left|
\begin{array}{ccc}
\Omega(\phi_{1}, \psi_{N})  &\cdots &\Omega(\phi_{M}, \psi_{N})\\
\vdots &\vdots &\vdots\\
\Omega(\phi_{1}, \psi_1) &\cdots &\Omega(\phi_{M}, \psi_{1})\\
\phi_{1}&\cdots&\phi_{M}\\
\vdots &\vdots &\vdots\\
\phi_{1}^{(M-N-1)}&\cdots&\phi_{M}^{(M-N-1)}
\end{array}
\right|, \quad M\geq N.
\end{equation*}
Here $f^{(n)}\triangleq\pa_x^n(f)$, $\overrightarrow{M}\triangleq(M,\ldots,2,1),\overleftarrow{M}\triangleq(1,2,\ldots,M), \psi_{\overrightarrow{N}}\triangleq(\psi_N,\ldots,\psi_1)$ and $\phi_{\overleftarrow{M}}\triangleq(\phi_1,\ldots,\phi_M)$. Note that the generalized Wronskian $IW_{N,M}$ contains the ordinary Wronskians (when $N=0$) and the Grammian (when $M=N$) as special cases. 
\subsection{Main results}
The main results of this paper are given as follows.
\begin{theorem}\label{th1}
Given the mKP tau pair $(\tau_0,\tau_1)$, the mKP eigenfunction $q$ and the mKP adjoint eigenfunction $r$, satisfying $q_{t_n}=(L^n)_{\geq1}(q)$ and $r_{t_n}=-\big(\pa(L^n)_{\geq1}\pa^{-1}\big)^*(r)$, if set
\begin{align*}
  (\tau_0^{[1]},\tau_1^{[1]})=(q \tau_1,-q_x\frac{\tau_1^2}{\tau_0})\quad or \quad (r_x\frac{\tau_0^2}{\tau_1},r\tau_0),
\end{align*}
then $\left(\tau_0^{[1]},\tau_1^{[1]}\right)$ is a new mKP tau pair, satisfying the mKP bilinear equation \eqref{mkpbilnear2}.
\end{theorem}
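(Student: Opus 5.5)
Our plan is to realize both choices as mKP Darboux transformations acting on the dressing operator $W$. We then read off the new wave and adjoint wave functions and check ${\rm Res}_z w^{[1]}(t,z)w^{[1]*}(t',z)=1$. This bilinear identity is equivalent to \eqref{mkpbilnear2} once $w^{[1]},w^{[1]*}$ are written in the form \eqref{adwavefunction} with $(\tau_0^{[1]},\tau_1^{[1]})$.

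For the first pair we use the gauge-type transformation $T_D(q)=q^{-1}$ composed with a differential Darboux step. The natural candidate is $T=(q^{-1})_x^{-1}\pa\, q^{-1}$, i.e. $T=\left(\tfrac{-q_x}{q^2}\right)^{-1}\pa q^{-1}$, which preserves the form $\pa+\sum_{i\ge0} v_i\pa^{-i}$ of the mKP Lax operator. We take $W^{[1]}=TW\pa^{-1}$ (or $TW$, adjusting the $\pa$-power so that $W^{[1]}=e^{\alpha'}+O(\pa^{-1})$). We then verify that $L^{[1]}=TLT^{-1}$ satisfies \eqref{Lax equation}, using $q_{t_n}=(L^n)_{\ge1}(q)$. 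The key identity is $T_{t_n}=(L^{[1]n})_{\ge1}T-T(L^n)_{\ge1}$. This holds because $(TL^nT^{-1})_{\le0}$ is killed in the right way: $T$ annihilates constants after multiplication by $q^{-1}$, and $q$ is an eigenfunction of $(L^n)_{\ge1}$, which has no constant term.

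The second pair is treated analogously with $T=\pa^{-1} r\,\ldots$, using the adjoint eigenfunction condition $r_{t_n}=-(\pa(L^n)_{\ge1}\pa^{-1})^*(r)$. Alternatively, we use the adjoint symmetry that exchanges $w\leftrightarrow w^*$ and $(\tau_0,\tau_1)\to(\tau_1,\tau_0)$ with $t\to -t$, so that the second case follows from the first.

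Next we compute $w^{[1]}=T\big(w\big)$ explicitly. With $w=\frac{\tau_0(t-[z^{-1}])}{\tau_1(t)}e^{\xi}$, applying $q^{-1}$ and then $\pa$ and dividing by $(q^{-1})_x$ gives an expression involving $\pa\big(w/q\big)$. To express it through shifted tau functions we need the standard shift identity for an mKP eigenfunction:
\[
q(t-[z^{-1}])\,\tau_1(t-[z^{-1}])\,\tau_1(t)\ \text{vs.}\ q(t)\ldots
\]
This identity is a consequence of \eqref{resequation1}, or equivalently of the bilinear equation satisfied by $(\tau_1,q\tau_1)$ of type \eqref{kmcmkp2}. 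It gives $\Omega(q_x,w^*)$ and $\pa(w/q)$ in closed form, namely $\pa\big(\tfrac{w}{q}\big)\propto \frac{\tau_{1}(t-[z^{-1}])\,q(t-[z^{-1}])}{q\,\tau_1}\,z\,e^{\xi}$. Substituting, $w^{[1]}$ takes the form $\frac{\tau_0^{[1]}(t-[z^{-1}])}{\tau_1^{[1]}(t)}e^{\xi}$ with $\tau_0^{[1]}=q\tau_1$ and $\tau_1^{[1]}=-q_x\tau_1^2/\tau_0$. The factor $\tau_1^2/\tau_0$ arises from $q^{2}/q_x$ together with $\tau_1/\tau_0$ from the old wave function. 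Similarly $w^{[1]*}=T^{-1*}\ldots$ is computed with the help of $\Omega(q,\cdot)$.

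The main obstacle is this middle step: proving the Fay-type shift identities that convert $\pa(w/q)$ and the adjoint counterpart into pure ratios of shifted tau functions with exactly the prescribed normalizations. Also delicate is controlling the integration constants in $\Omega$, via \eqref{omeq}. Once these are in hand, the residue identity ${\rm Res}_z w^{[1]}w^{[1]*}=1$ follows from ${\rm Res}_z(TW\pa^{-1}\cdots)$ by the standard lemma ${\rm Res}_z P(e^{xz})Q(e^{-x'z})$ expressing the kernel of $PQ^*$. Since $W^{[1]}\pa^{-1}W^{[1]-1}=T\,(W\pa^{-1}W^{-1})\,T^{-1}$ has the correct leading behavior, we obtain \eqref{mkpbilnear2} for the new pair.
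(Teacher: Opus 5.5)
Your framework is sound. $W^{[1]}=T_D(q)W\pa^{-1}$ satisfies the mKP Sato equations, because the differential operators $T_{t_n}$ and $(L^{[1]n})_{\geq 1}T-T(L^n)_{\geq 1}$ have order $\le 1$ and agree on $1$ and on $q$. Hence ${\rm Res}_z w^{[1]}(t,z)w^{[1]*}(t',z)=1$. This is \eqref{mkpbilnear2} for $(\tau_0^{[1]},\tau_1^{[1]})$ once \emph{both} $w^{[1]}$ and $w^{[1]*}$ are shown to have the form \eqref{adwavefunction}. That identification is the whole content of the theorem, and you leave it open.

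The forward half can be closed, but not from the input you name: $(\tau_1,q\tau_1)$ is not an mKP tau pair in general. The relevant fact is that $(\tau_0,\sigma)$ with $\sigma=q\tau_1$ is one. This is \eqref{kmcmkp2} with $a=2$, or follows from Propositions 1 and 3, since $\sigma=\tilde q\tau_0$ with $\tilde q=\frac{\tau_1}{\tau_0}q$. Its differential Fay identity (Section~\ref{section5}) reads
$\pa_x\big(\frac{\tau_0(t-[z^{-1}])}{\sigma(t)}e^{\xi(t,z)}\big)=z\frac{\tau_0(t)\sigma(t-[z^{-1}])}{\sigma(t)^2}e^{\xi(t,z)}$.
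It gives exactly $w^{[1]}=z^{-1}T_D(q)(w)=\frac{\tau_0^{[1]}(t-[z^{-1}])}{\tau_1^{[1]}(t)}e^{\xi(t,z)}$. Note the factor $z^{-1}$ produced by $\pa^{-1}$.

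The adjoint half is a genuine gap. One finds $w^{[1]*}=-z\pa^{-1}\big((q^{-1})_x\pa^{-1}(q\,w^*_x)\big)$. The inner integral is harmless: the adjoint Fay identities of $(\tau_0,\tau_1)$ and $(\tau_0,\sigma)$ give $\pa^{-1}(qw^*_x)=\frac{\sigma(t+[z^{-1}])}{z\tau_0(t)}e^{-\xi(t,z)}$. The outer integral, however, requires
\[
\pa_x\Big(\frac{\tau_1^{[1]}(t+[z^{-1}])}{z\,\tau_0^{[1]}(t)}e^{-\xi(t,z)}\Big)=\frac{q_x}{q^2}\,\frac{\sigma(t+[z^{-1}])}{\tau_0(t)}\,e^{-\xi(t,z)} .
\]
This is the adjoint Fay identity of the \emph{new} pair itself. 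It involves $q_x(t+[z^{-1}])$ and does not follow from any $\Omega(q,\cdot)$ relation for the old pair, so invoking it is circular. In KP language it says that $\tilde\phi^{[1]}=T_d(\tilde q)(\tilde\phi)$, with $\tilde\phi=\tau_1/\tau_0$, is a KP eigenfunction of $\tilde q\tau_0$; that is Proposition 4.

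You can repair this in either of two ways:
\begin{itemize}
\item Invoke that fact (Proposition 4) directly.
\item Bypass $w^{[1]*}$ altogether. $W^{[1]}$ solves the Sato equations, so it admits some tau pair, and a tau pair is fixed up to a constant by $w$ alone: if $\rho_0(t-[z^{-1}])=\rho_1(t)$ for all $z$, then $\rho_0=\rho_1$ is constant. So matching $w^{[1]}$ suffices.
\end{itemize}

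Once that input is granted, the paper needs nothing else. It writes $\tau_0^{[1]}=\tilde q\tau_0$ and $\tau_1^{[1]}=(\tilde\phi_x\tilde q-\tilde\phi\tilde q_x)\tau_0=\tilde\phi^{[1]}\tau_0^{[1]}$, then concludes from Propositions 1, 3 and 4. No Fay identities or integration constants are involved.
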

To express the next results, let us introduce the following symbols:
\begin{align*}
  &\overrightarrow{M}\setminus\{j\}\triangleq(M,\ldots,j+1,j-1,\ldots,2,1), \\
  &\overleftarrow{M}\setminus\{j\}\triangleq(1,2,\ldots,j-1,j+1,\ldots,M),\\
  &q_{\overrightarrow{M},x}\triangleq (q_{M,x},\ldots,q_{1,x}),\quad q_{\overleftarrow{M},x}\triangleq(q_{1,x},\ldots,q_{M,x}).
\end{align*}

\begin{corollary}\label{corollary1}
  Given a group of independent mKP eigenfunctions $f_1,f_2,\ldots,f_M$ and a group of independent mKP adjoint eigenfunctions $g_1,g_1,\ldots,g_N$, corresponding to the mKP tau pair $(\tau_0,\tau_1)$, if define $\tau_0^{[M+N]}$ and $\tau_1^{[M+N]}$ as follows,
  \begin{align*}
   \bullet ~ when ~ M \geq N, \nonumber \\
   \tau_0^{[M+N]} =& \frac{IW_{N,M}(g_{\overrightarrow{N},x};\ f_{\overleftarrow{M}}) \tau_1^{M-N}}{\tau_0^{M-N-1}},\quad \tau_1^{[M+N]} = \frac{IW_{N,M+1}(g_{\overrightarrow{N},x};\  f_{\overleftarrow{M}},1) \tau_1^{M-N+1}}{\tau_0^{M-N}},\\
  \bullet ~ when ~ M < N, \nonumber \\
   \tau_0^{[M+N]} =& \frac{IW_{M,N}(f_{\overrightarrow{M}};\ g_{\overleftarrow{N},x} ) \tau_1^{M-N}}{\tau_0^{M-N-1}},\quad
   \tau_1^{[M+N]} = \frac{(-1)^MIW_{M+1,N}(1,f_{\overrightarrow{M}};\ g_{\overleftarrow{N},x}) \tau_1^{M-N+1}}{\tau_0^{M-N}},
  \end{align*}
 then $\left(\tau_0^{[M+N]},\tau_1^{[M+N]}\right)$ is a new mKP tau pair.
\end{corollary}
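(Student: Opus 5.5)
We will prove Corollary \ref{corollary1} by iterating Theorem \ref{th1} $M+N$ times. At each step the eigenfunctions and adjoint eigenfunctions are transformed along with the tau pair, and at the end the iterated quantities are identified with the generalized Wronskians.

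\emph{Transformation rules for wave-type functions.} For the first choice $(\tau_0^{[1]},\tau_1^{[1]})=(q\tau_1,-q_x\tau_1^2/\tau_0)$ we need the transformed eigenfunctions. This is the mKP Darboux transformation $T_D(q)=\pa\, q^{-1}$ acting on the dressing operator, i.e. $W\mapsto (q^{-1})_x^{-1}\pa q^{-1}W$ up to the normalization fixed by $\tau_1^{[1]}$. We read off:
\begin{itemize}
\item a new eigenfunction $f^{[1]}=\big(\tfrac{f}{q}\big)_x\big/\big(\tfrac{1}{q}\big)_x$, or equivalently a Wronskian ratio;
\item a new adjoint eigenfunction $g^{[1]}$ with $g^{[1]}_x$ proportional to $\Omega(q,g_x)$ divided by $q$.
\end{itemize}
For the second choice $(r_x\tau_0^2/\tau_1,r\tau_0)$ we use the adjoint transformation $T_I(r)=\pa^{-1}r_x$, which gives $f^{[1]}=\Omega(f,r_x)/r_x$-type expressions and $g^{[1]}$ given by adjoint Wronskian ratios. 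In each case we check that these are again eigenfunctions, resp. adjoint eigenfunctions, for the new pair with the flows $q_{t_n}=(L^n)_{\ge1}(q)$ and $r_{t_n}=-(\pa(L^n)_{\ge1}\pa^{-1})^*(r)$. This is a standard gauge computation with the transformed Lax operator, so Theorem \ref{th1} can be applied again.

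\emph{Ordering of the iteration.} For $M\ge N$ we first apply the $T_I$-steps with $g_1,\dots,g_N$, then the $T_D$-steps with $f_1,\dots,f_M$. Applying $T_D$ for the transformed $f_1^{[\cdot]},\dots,f_M^{[\cdot]}$ builds the rows $\phi,\phi',\dots$. Applying $T_I$ builds the rows $\Omega(\phi_j,g_{i,x})$. The order of the two types of step is immaterial because the transformations commute up to constants. By induction on $M+N$ we show the following:
\begin{itemize}
\item $\tau_0^{[M+N]}$ equals $IW_{N,M}(g_{\overrightarrow{N},x};f_{\overleftarrow{M}})\,\tau_1^{M-N}/\tau_0^{M-N-1}$;
\item $\tau_1^{[M+N]}$ is the same expression with an extra column $\phi_{M+1}=1$ and one more power. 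Here $\Omega(1,g_{i,x})=g_i$ with the chosen integration constants, which is consistent with $\tau_1^{[1]}=-q_x\tau_1^2/\tau_0$ after expanding the last column.
\end{itemize}
The inductive step uses the rule $\tau_0^{new}=q^{[\cdot]}\tau_1^{[\cdot]}$ (or $\tau_1^{new}=r^{[\cdot]}\tau_0^{[\cdot]}$) together with the explicit transformed eigenfunction expressed as a ratio of generalized Wronskians. For $M<N$ the roles swap: we use $IW_{M,N}$ with the $g_{i,x}$ as columns, and the sign $(-1)^M$ appears from moving the column/row of $1$'s into place.

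\emph{Main obstacle.} The main obstacle is the determinant identity needed in the induction. We must show that the transformed eigenfunction after $k$ steps equals a ratio
\[
\frac{IW_{N,M+1}(\dots;f_{\overleftarrow{M}},f)}{IW_{N,M}(\dots;f_{\overleftarrow{M}})}\cdot(\text{tau factors}),
\]
and that the product $q^{[\cdot]}\tau_1^{[\cdot]}$ collapses to the next $IW$. We would prove this with the Jacobi (Sylvester/Plücker) identity for bordered determinants, applied to the $(M+1)\times(M+1)$ and $(M+2)\times(M+2)$ matrices obtained by adjoining the column $f$ and the column $1$. We would also use the derivative rules $\pa_x\Omega(\phi,\psi)=\phi\psi$, which let us reduce the $\Omega$-rows, for instance $\pa_x\Omega(\phi_j,g_{i,x})=\phi_j g_{i,x}$. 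The tau-power bookkeeping $\tau_1^{M-N}/\tau_0^{M-N-1}$ then follows by tracking the two-step factors $\tau_1^2/\tau_0$ in Theorem \ref{th1}.
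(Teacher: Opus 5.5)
Your plan (iterate Theorem~\ref{th1} along the Darboux chain and close the induction with Jacobi identities) is a genuinely different route from the paper's. It can be completed, but it essentially re-proves an mKP analogue of Proposition 5.

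\textbf{How the paper avoids iteration.} By Propositions 1 and 3, $\tau_0$ is a KP tau function, $\tilde\phi=\tau_1/\tau_0$ is a KP eigenfunction, and $\tilde f_i=\tilde\phi f_i$, $\tilde g_i=\tilde\phi^{-1}g_{i,x}$ are KP eigenfunctions and adjoint eigenfunctions for $\tau_0$. Because $\tilde f_j\tilde g_i=f_jg_{i,x}$, the $\Omega$-rows are unchanged. A Leibniz row reduction absorbs the prefactor $\tau_1^{M-N}/\tau_0^{M-N-1}=\tilde\phi^{M-N}\tau_0$, giving $\tau_0^{[M+N]}=IW_{N,M}(\tilde g_{\overrightarrow{N}};\tilde f_{\overleftarrow{M}})\tau_0$ and $\tau_1^{[M+N]}=IW_{N,M+1}(\tilde g_{\overrightarrow{N}};\tilde f_{\overleftarrow{M}},\tilde\phi)\tau_0$. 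Treating $\tilde\phi$ as one extra KP eigenfunction, Proposition 5 says the first is a KP tau function and $\tau_1^{[M+N]}/\tau_0^{[M+N]}$ is a KP eigenfunction for it; Proposition 1 then finishes. All determinant identities, the passage between the regimes $M\ge N$ and $M<N$, and the integration-constant bookkeeping are thus delegated to the known KP result. Your approach stays in mKP language and would yield the Appendix formulas as a by-product, but it is far longer, and as sketched its decisive steps are only named.

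\textbf{What you would need to supply or fix.}

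(i) Theorem~\ref{th1} only outputs a tau pair. To apply it again you need that the Lax operator built from $(q\tau_1,-q_x\tau_1^2/\tau_0)$ via \eqref{adwavefunction} is exactly $T_D(q)LT_D(q)^{-1}$, and similarly for $T_I(r)$. A gauge computation only shows that $T_D(q)(f)$ is an eigenfunction of the gauged operator. The identification $W^{[1]}=T_D(q)W\pa^{-1}$ amounts to the tau identity $z\big(q(t)-q(t-[z^{-1}])\big)\tau_1(t-[z^{-1}])\tau_0(t)=q_x(t)\tau_0(t-[z^{-1}])\tau_1(t)$. This can be extracted from the KP argument proving Theorem~\ref{th1}, where $(\tilde\phi^{[1]})^{-1}T_d(\tilde q)\tilde\phi=T_D(q)$, but it must be stated and proved.

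(ii) Several formulas are off. The correct ones are $T_D(q)=(q^{-1})_x^{-1}\pa q^{-1}$, $T_I(r)=r^{-1}\pa^{-1}r_x$, $g^{[1]}_x=-(q^{-1})_x\Omega(q,g_x)$ (not $\Omega(q,g_x)/q$), and $T_I(r)(f)=\Omega(f,r_x)/r$. The transformed eigenfunction is $A/B$ with $A=IW_{N,M+1}(g_{\overrightarrow{N},x};f_{\overleftarrow{M}},f_{M+1})$ and $B=IW_{N,M+1}(g_{\overrightarrow{N},x};f_{\overleftarrow{M}},1)$, with no tau factors. With this, $\tau_0'=q\tau_1$ is immediate. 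A $T_D$-step then needs only $A_xB-AB_x=-C\,IW_{N,M+2}(g_{\overrightarrow{N},x};f_{\overleftarrow{M+1}},1)$, where $C=IW_{N,M}(g_{\overrightarrow{N},x};f_{\overleftarrow{M}})$. A $T_I$-step additionally needs a conversion: the transformed adjoint eigenfunction naturally comes out in the $(g;f_x)$-form, while the tau functions are in the $(g_x;f)$-form.

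(iii) For $M\ge N$, do the $T_D$-steps first, as in the paper's chain. In your order the induction passes through $M'<N$ and must switch determinant shapes.

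(iv) The integration constants of the $\Omega$'s created along the chain must be chosen to match those in the final determinant.
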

\begin{remark}
Though there are similar results in \cite{cheng2018}, one can not know whether these tau functions satisfy the mKP bilinear equation \eqref{mkpbilnear2}. In fact, it is quite nontrivial to verify this. Please see Section \ref{section2} for more details.
\end{remark}
\begin{theorem}\label{th2}
 Given $f_i~ ( 1 \leq i \leq s )$ satisfying  $f_{i,t_n} = f_i^{(n)}$, $g_j ~( 1 \leq j \leq a )$ satisfying  $g_{j,t_n} = (-1)^{n-1}g_j^{(n)}$, $\varphi_l=e^{\xi(t,\lambda_l)}~(1\leq l \leq K)$ and $\psi_p=e^{-\xi(t,\mu_p)}~(1\leq p \leq Q)$, let us set
 \begin{align*}
   \mathfrak{q}_{\overrightarrow{M}}&\triangleq(\varphi_K,\ldots,\varphi_1, f_s^{(k M_s)},\ldots,f_s^{(k)},f_s,\ldots,f_1^{(k M_1)},\ldots,f_1^{(k)},f_1), \\
   \mathfrak{r}_{\overrightarrow{N}}&\triangleq(\psi_Q,\ldots, \psi_1,g_a^{(kN_a)},\ldots,g_a^{(k)},g_a,\ldots,
g_1^{(kN_1)},\ldots,g_1^{(k)},g_1),
 \end{align*}
where \( M = K + s + \sum_{j=1}^s M_j \), \(N = Q+a+\sum_{j=1}^a N_j\), $s+a=m$, and denote  $\tau_0^{[M+N]}$, $\tau_1^{[M+N]}$ in the following way.

$\bullet ~ When ~ M > N, $
\begin{align*}
   \tau_0^{[M+N]} &=IW_{N,M}(\mathfrak{r}_{\overrightarrow{N},x};\,\mathfrak{q}_{\overleftarrow{M}}),\quad \tau_1^{[M+N]} =IW_{N,M+1}(\mathfrak{r}_{\overrightarrow{N},x};\,\mathfrak{q}_{\overleftarrow{M}},1),\\
   \tau_{2,i}^{[M+N]} &=-IW_{N,M+1}(\mathfrak{r}_{\overrightarrow{N},x};\,\mathfrak{q}_{\overleftarrow{M}},f_i^{((M_i+1)k)}),\quad 1\leq i\leq s,\\
   \tau_{-1,i}^{[M+N]}&=(-1)^{M[i]+M}IW_{N,M}
    (\mathfrak{r}_{\overrightarrow{N},x};\,\mathfrak{q}_{\overleftarrow{M}\setminus\{M[i]\}},1),\quad 1\leq i\leq s,\\
   \tau_{2,j}^{[M+N]} &=(-1)^{N[i-s]+N}\frac{IW_{N,M}(1,\mathfrak{r}_{\overrightarrow{N}\setminus\{N[i-s]\}};\,\mathfrak{q}_{\overleftarrow{M},x})\cdot IW_{N,M+1}(\mathfrak{r}_{\overrightarrow{N},x};\,\mathfrak{q}_{\overleftarrow{M}},1)}
   {IW_{N,M}(\mathfrak{r}_{\overrightarrow{N}};\,\mathfrak{q}_{\overleftarrow{M},x})},\quad s+1\leq j\leq m,\nonumber\\
    \tau_{-1,j}^{[M+N]}&=(-1)^{M+k} \frac{IW_{N+1,M}(\mathfrak{r}_{\overrightarrow{N}},g_{i-s}^{((N_{i-s}+1)k)}; \mathfrak{q}_{\overleftarrow{M},x})\cdot IW_{N,M}(\mathfrak{r}_{\overrightarrow{N},x};\,\mathfrak{q}_{\overleftarrow{M}})}{IW_{N+1,M}(1,\mathfrak{r}_{\overrightarrow{N}};\, \mathfrak{q}_{\overleftarrow{M},x})},\quad s+1\leq j\leq m.
\end{align*}
Here we denote $M[i]=\sum_{j=1}^{i}M_j+i$ and $N[i]=\sum_{j=1}^{i}N_j+i$.

$\bullet ~ When ~ M =N,$
\begin{align*}
\tau_0^{[2M]} &=IW_{M,M}(\mathfrak{r}_{\overrightarrow{M},x};\,\mathfrak{q}_{\overleftarrow{M}}),\quad \tau_1^{[2M]} =IW_{M,M+1}(\mathfrak{r}_{\overrightarrow{M},x};\,\mathfrak{q}_{\overleftarrow{M}},1),\\
   \tau_{2,i}^{[2M]} &=-IW_{M,M+1}(\mathfrak{r}_{\overrightarrow{M},x};\,\mathfrak{q}_{\overleftarrow{M}},f_i^{((M_i+1)k)}),\quad 1\leq i\leq s,\\
   \tau_{-1,i}^{[2M]}&=(-1)^{M[i]+M}IW_{M,M}(1,\mathfrak{q}_{\overrightarrow{M}\setminus\{M[i]\}};\, \mathfrak{r}_{\overleftarrow{M},x}),\quad 1\leq i\leq s,\\
     \tau_{2,j}^{[2M]} &=(-1)^{M[i-s]+M}\frac{IW_{M,M}(1,\mathfrak{r}_{\overrightarrow{M}\setminus\{M[i-s]\}};\,\mathfrak{q}_{\overleftarrow{M},x})\cdot IW_{M,M+1}(\mathfrak{r}_{\overrightarrow{M},x};\,\mathfrak{q}_{\overleftarrow{M}},1)}
   {IW_{M,M}(\mathfrak{r}_{\overrightarrow{M}};\,\mathfrak{q}_{\overleftarrow{M},x})},\quad s+1\leq j\leq m,\\
    \tau_{-1,j}^{[2M]}&=(-1)^{k+1}\frac{IW_{M,M+1}(\mathfrak{q}_{\overrightarrow{M},x}; \mathfrak{r}_{\overleftarrow{M}},g_{i-s}^{((N_{i-s}+1)k)})\cdot IW_{M,M} (\mathfrak{r}_{\overrightarrow{M},x};\,\mathfrak{q}_{\overleftarrow{M}})}
    {IW_{M,M+1}({\mathfrak{q}_{\overrightarrow{M},x};\ \mathfrak{r}_{\overleftarrow{M}},1)}},\quad s+1\leq j\leq m.
   \end{align*}
   $\bullet ~ When ~ M < N,$
  \begin{align*}
   \tau_0^{[M+N]} &=  IW_{M,N}(\mathfrak{q}_{\overrightarrow{M}};\,\mathfrak{r}_{\overleftarrow{N},x}),\quad \tau_1^{[M+N]} = (-1)^{M}IW_{M+1,N}(1,\mathfrak{q}_{\overrightarrow{M}};\,\mathfrak{r}_{\overleftarrow{N},x}),\\
   \tau_{2,i}^{[M+N]} &= -IW_{M+1,N}(\mathfrak{q}_{\overrightarrow{M}},f_i^{((M_i+1)k)};\, \mathfrak{r}_{\overleftarrow{N},x}),\quad 1\leq i\leq s,\\
   \tau_{-1,i}^{[M+N]}&=(-1)^{M[i]+M}IW_{M,N}(1,\mathfrak{q}_{\overrightarrow{M}\setminus\{M[i]\}};\, \mathfrak{r}_{\overleftarrow{N},x}),\quad 1\leq i\leq s,\\
   \tau_{2,j}^{[M+N]} &=(-1)^{N[i-s]+M+N}\frac{IW_{M,N}(\mathfrak{q}_{\overrightarrow{M},x};\,\mathfrak{r}_{\overleftarrow{N}
   \setminus\{N[i-s]\}},1 )\cdot IW_{M+1,N}(1,\mathfrak{q}_{\overrightarrow{M}}; \mathfrak{r}_{\overleftarrow{N},x})}{IW_{M,N}(\mathfrak{q}_{\overrightarrow{M},x}; \mathfrak{r}_{\overleftarrow{N}})},\quad s+1\leq j\leq m,\\
\tau_{-1,j}^{[M+N]}&=(-1)^{k+1} \frac{IW_{M,N+1}(\mathfrak{q}_{\overrightarrow{M},x};\, \mathfrak{r}_{\overleftarrow{N}},g_{i-s}^{((N_{i-s}+1)k)} )\cdot IW_{M,N}(\mathfrak{q}_{\overrightarrow{M}};\, \mathfrak{r}_{\overleftarrow{N},x})}{IW_{M,N+1}(\mathfrak{q}_{\overrightarrow{M},x};\, \mathfrak{r}_{\overleftarrow{N}},1)},\quad s+1\leq j\leq m.
  \end{align*}
Then $\tau_0^{[M+N]}$,\ $\tau_1^{[M+N]}$,\ $\tau_{2,i}^{[M+N]}$ and $\tau_{-1,i}^{[M+N]}\ (1\leq i \leq m)$ satisfy the $(k,m)$-constrained mKP bilinear equations \eqref{kmcmkp1}-\eqref{kmcmkp3}.
\end{theorem}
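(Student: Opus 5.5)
Theorem \ref{th2} will be obtained by applying the mKP Darboux transformations of Theorem \ref{th1} and Corollary \ref{corollary1} to the trivial seed, and then showing that the constraint structure is transported along the chain. The seed is the free mKP hierarchy with $\tau_0=\tau_1=1$, so $W=1$ and $L=\pa$. In this seed, every $f_i^{(j)}$ and every $\varphi_l=e^{\xi(t,\lambda_l)}$ is an eigenfunction, since $q_{t_n}=\pa^n q$. Similarly, every $g_j^{(j')}$ and every $\psi_p=e^{-\xi(t,\mu_p)}$ is an adjoint eigenfunction, since $r_{t_n}=-(\pa\,\pa^n\pa^{-1})^*(r)=(-1)^{n-1}\pa^n r$. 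With these choices, Corollary \ref{corollary1} with $f_{\overleftarrow M}=\mathfrak q_{\overleftarrow M}$ and $g_{\overrightarrow N}=\mathfrak r_{\overrightarrow N}$ gives directly that $(\tau_0^{[M+N]},\tau_1^{[M+N]})$ is an mKP tau pair. The powers of $\tau_0,\tau_1$ are all $1$, which yields the stated $IW$ expressions. Thus \eqref{kmcmkp3} holds.

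The remaining work is to identify $q_i=\tau_{2,i}/\tau_1$ and $r_i=\tau_{-1,i}/\tau_0$ and to check \eqref{kmcmkp1} and \eqref{kmcmkp2}. The seed $L=\pa$ is a $(k,0)$-constrained operator with $(L^k)_{\le0}=0$. The Darboux dressing $L^{[M+N]}=T L T^{-1}$ is built from the generalized Wronskian operator $T$ that annihilates $\mathfrak q$ and whose adjoint-type operator annihilates $\mathfrak r$.

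The first key step is a lemma on how these transformations act. Under $q$-type steps, transformed eigenfunctions have the form $T(h)=IW_{N,M+1}(\cdot;\mathfrak q,h)/IW_{N,M}$. Under $r$-type steps, they have the form $\Omega$-bordered determinants, and dually for adjoint eigenfunctions. These formulas come from iterating Theorem \ref{th1} and from the determinant identities (Jacobi/Sylvester) already used in proving Corollary \ref{corollary1}.

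The second step is to compute $(L^{[M+N],k})_{\le0}=(T\pa^kT^{-1})_{\le0}$. I will use the standard fact that $(T\pa^kT^{-1})_{\le0}=\sum T(\pa^k h_j)\pa^{-1}\tilde h_j\pa$, where $h_j$ runs over the kernel data of $T$ and $\tilde h_j$ over the dual basis. The chains $f_i,f_i^{(k)},\dots,f_i^{(kM_i)}$ are designed so that $\pa^k$ maps each element to the next one. Consequently $T(\pa^k f_i^{(kl)})=0$ except at the top of the chain, where it gives $T(f_i^{((M_i+1)k)})$, which equals $\tau_{2,i}/\tau_1$. The paired adjoint function is the dual-basis element attached to the top $f_i^{(kM_i)}$. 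After one row expansion, it equals the determinant with column $M[i]$ removed and replaced by $1$, which is $\tau_{-1,i}/\tau_0$ together with the sign $(-1)^{M[i]+M}$. The contributions of the pure exponentials $\varphi_l$ and $\psi_p$ cancel, since $\pa^k\varphi_l=\lambda_l^k\varphi_l$ lies back in the kernel. The $g$-chains give the $j=s+1,\dots,m$ terms by the adjoint analogue. In that case the quotients of $IW$'s in the statement arise because the eigenfunction $T(\cdot)$ built from an adjoint-side quantity must be normalized by the $\mathfrak q_x$-Grammian.

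Once $(L^k)_{\le0}=\sum q_i\pa^{-1}r_i\pa$ is established with $q_i$ an eigenfunction and $r_i$ an adjoint eigenfunction of the transformed hierarchy, \eqref{kmcmkp1} and \eqref{kmcmkp2} follow from the cited equivalence \cite{Chen2020,wu2022MMMA}. Equation \eqref{kmcmkp2} can also be seen directly from Theorem \ref{th1}: $(\tau_1,\tau_{2,i})=(\tau_1,q_i\tau_1)$ is the $q$-transform pair and $(\tau_{-1,i},\tau_0)$ is the $r$-transform pair.

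The main obstacle is the second step. It requires matching the residue/dual-basis expression with the specific bordered generalized Wronskians, including the exact signs $(-1)^{M[i]+M}$ and $(-1)^{k+1}$, and doing so separately in the three regimes $M>N$, $M=N$ and $M<N$, where the determinant shapes differ. I would first settle the case $N=0$ (pure Wronskians), where the computation is classical, and then reach $N>0$ by applying $r$-type steps one at a time. At each step I would verify the transformation formula for $q_i$ and $r_i$ by a Jacobi identity on the enlarged determinant.
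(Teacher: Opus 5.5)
Your proposal is essentially the paper's proof. The paper dresses $L=\pa$ (with $\tau_0=\tau_1=1$) by the mixed chain $T^{[\overrightarrow{M},\overrightarrow{N}]}$ built on $\mathfrak q,\mathfrak r$. It takes the tau pair from Corollary~\ref{corollary1}, also citing \cite[Proposition 5.2]{cheng2018} to know this is the tau pair of $T\pa T^{-1}$; you need that fact too before invoking the equivalence. It computes $(T\pa^kT^{-1})_{<1}$ so that the $k$-chains leave only their top terms and the exponentials drop out (Proposition 6 and Corollary~\ref{Corollary2}). It reads off $q_i,r_i$ from the Appendix determinant forms of $T$ and of $\pa^{-1}T^{-1*}\pa$, and finishes with \cite[Proposition 8]{wu2022MMMA}.

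Your ``standard fact'' is not off-the-shelf for mixed $T_D/T_I$ chains. The paper proves it by induction on $M+N$ from the one-step rules \eqref{Lk[1]}--\eqref{TI(g)} together with Lemma 1, which records how the dual factors $f^{-1}$, $g^{-1}$ transform under the next step. This is the operator-level version of your determinant-level Jacobi-identity induction. It is cleaner because determinants enter only at the end.

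There are two slips, neither of which breaks your main line.

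\textbf{The denominator of $T(h)$.} For $M\ge N$ the mKP formula is
$T(h)=IW_{N,M+1}(\mathfrak r_{\overrightarrow N,x};\mathfrak q_{\overleftarrow M},h)/IW_{N,M+1}(\mathfrak r_{\overrightarrow N,x};\mathfrak q_{\overleftarrow M},1)$.
So the denominator is $\tau_1^{[M+N]}$, not $IW_{N,M}=\tau_0^{[M+N]}$ as in the KP Wronskian formula. Already for one step, $T_D(f)(h)=(f_xh-fh_x)/f_x$. Your later identification $q_i=\tau_{2,i}/\tau_1$ tacitly uses the correct version.

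\textbf{The aside on \eqref{kmcmkp2}.} This equation does not follow ``directly from Theorem~\ref{th1}'' as you claim.
\begin{itemize}
\item For $a=2$ and $a=1$, \eqref{kmcmkp2} asserts that $(\tau_0,\tau_{2,i})$ and $(\tau_{-1,i},\tau_1)$ are mKP tau pairs. It does not concern $(\tau_1,\tau_{2,i})$ and $(\tau_{-1,i},\tau_0)$; in general $(\tau_1,q\tau_1)$ is not a tau pair.
\item Theorem~\ref{th1} produces the pair $(q\tau_1,-q_x\tau_1^2/\tau_0)$, with $q\tau_1$ in the first slot, so it does not give either required pair.
\item A correct direct argument uses Propositions 1 and 3. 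The function $\tau_{2,i}/\tau_0=(\tau_1/\tau_0)q_i$ is a KP eigenfunction for $\tau_0$, and $\tau_{-1,i}/\tau_1=(\tau_0/\tau_1)r_i$ is a KP adjoint eigenfunction for $\tau_1$.
\end{itemize}
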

\begin{remark}
The integral constants of $\Omega(f,g)$ in Theorem 2 are fixed by the way in \eqref{omeq} and \eqref{omer}, that is, the usual indefinite integral, if one of $f$ and $g$ is the sum of $e^{\xi(t,\lambda_i)}$. While if both $f$ and $g$ are polynomials of $t$, then $\Omega(f(t),g(t))=\sum_{n=0}^{+\infty}\int_0^1 t_i C_n(yt)dy$, where $C_n(t)=\Omega(f,g)_{t_n}=\mathrm{Res}_\pa (\pa^{-1}gA_nf\pa^{-1})$(please see \eqref{omegafg}).
\end{remark}
\subsection{The framework of this paper}
This paper is organized as follows.   In Section \ref{section2}, the relationship between the KP and mKP hierarchies is discussed, alongside the properties of the mKP Darboux transformations. Subsequently in Section \ref{section3}, we construct the generalized Wronskian solutions of the $(k, m)$-constrained mKP hierarchy. Next in Section \ref{section4}, we present several   examples of $(1, m)$-constrained and $(2, 2)$-constrained mKP hierarchies, deriving their evolution equations and explicit solutions. Finally in Section \ref{section5}, some conclusions and discussions are presented.

\section{The tau pair of the mKP hierarchy}\label{section2}
In this section, we first review the relations between the KP and mKP hierarchies in the aspects of tau functions, Lax operators, eigenfunctions, and so on. Then based on the KP Darboux transformation and its corresponding results, Theorem 1 and Corollary 1 are proved.

\subsection{The relation between the KP and mKP hierarchies}
Let us start from the KP hierarchy \cite{DJKM,Dickey2003} defined by the Lax equation $\mathcal{L}_{t_n}=[(\mathcal{L}^n)_{\geq 0},\mathcal{L}]$ with the Lax operator $\mathcal{L}=\pa+\sum_{j=1}^{\infty}u_{j+1}\pa^{-j}$, then we can also define the KP eigenfunction $\tilde{q}$ and the KP adjoint eigenfunction $\tilde{r}$ as follows:
\begin{align*}
\tilde{q}_{t_n}=(\mathcal{L}^n)_{\geq 0}(\tilde{q}),\quad \tilde{r}_{t_n}=-(\mathcal{L}^n)_{\geq 0}^*(\tilde{r}).
\end{align*}
\begin{proposition}\cite{cheng2018kp,Kac2018,Zabrodin2026}
Given the KP tau function $\tau_{\text {\tiny \rm{KP}}}$, let us assume that $\tilde{q}$ is the KP eigenfunction and $\tilde{r}$ is the KP
adjoint eigenfunction with respect to $\tau_{\text {\tiny \rm{KP}}}$, then
\begin{align*}
(\tau_0,\tau_1):=(\tau_{\text {\tiny \rm{KP}}}, \tilde{q}\tau_{\text {\tiny \rm{KP}}}) \quad or \quad (\tilde{r}\tau_{\text {\tiny \rm{KP}}}, \tau_{\text {\tiny \rm{KP}}}),
\end{align*}
satisfies the mKP bilinear equation \eqref{mkpbilnear2}. In particular, both $\tilde{q}\tau_{\text {\tiny \rm{KP}}}$ and $\tilde{r}\tau_{\text {\tiny \rm{KP}}}$ are new KP tau functions.

Conversely, if $(\tau_0,\tau_1)$ is the mKP tau pair, then $\tau_0$ and $\tau_1$ are the KP tau functions. And if set $\tilde{q}=\frac{\tau_1}{\tau_0},\ \tilde{r}=\frac{\tau_0}{\tau_1}$,
then $\tilde{q}$ is the KP eigenfunction corresponding to $\tau_0$, and $\tilde{r}$ is the KP adjoint eigenfunction corresponding to $\tau_1$.
\end{proposition}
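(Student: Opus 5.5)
We work directly with the bilinear identities, using the standard KP bilinear equation ${\rm Res}_z\,\tau(t-[z^{-1}])\tau(t'+[z^{-1}])e^{\xi(t-t',z)}=0$. We also use the characterization of KP (adjoint) eigenfunctions through the spectral representation. Concretely, $\tilde q$ is a KP eigenfunction for $\tau$ iff
\begin{align*}
{\rm Res}_z\, z^{-1}\tau(t-[z^{-1}])\,(\tilde q\tau)(t'+[z^{-1}])\,e^{\xi(t-t',z)}=\tilde q(t)\tau(t)\tau(t'),
\end{align*}
which is the standard squared-eigenfunction-potential identity. Equivalently, $\tilde q$ is an eigenfunction iff its wave-function pairing ${\rm Res}_z\, w_{\rm KP}(t,z)\,\Omega(\tilde q(t'),w^*_{\rm KP}(t',z))=\tilde q(t)-\tilde q(t')$ holds with the usual normalization of the integration constant. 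With $\tau_0=\tau_{\rm KP}$ and $\tau_1=\tilde q\tau_{\rm KP}$, the left side of \eqref{mkpbilnear2} becomes exactly the left side of this identity. So the first half of the statement reduces to that identity, and the adjoint case $(\tilde r\tau,\tau)$ is the mirror identity for adjoint eigenfunctions.

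To establish the identity, I would use the Sato dressing formalism. Write $w_{\rm KP}=\mathcal W(e^{\xi})$ and $w^*_{\rm KP}=(\mathcal W^{-1})^*(e^{-\xi})$. The key fact is the spectral representation $\tilde q(t)=-{\rm Res}_z\, w_{\rm KP}(t,z)\,\Omega(\tilde q(t'),w^*_{\rm KP}(t',z))+\tilde q(t')$. From the Fay-type formula one has $\Omega(\tilde q,w^*_{\rm KP})(t',z)=-z^{-1}\frac{(\tilde q\tau)(t'+[z^{-1}])}{\tau(t')}e^{-\xi(t',z)}$. To prove this formula, I would check that the right side satisfies the defining flows of $\Omega$. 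This means using the generating form $\Omega(\tilde q,w^*)_{t_n}$ together with the shift identity $\pa_x\Omega(\tilde q,w^*)=\tilde q\,w^*$, and then verifying the $x$-derivative. The $x$-derivative check is the discrete adjoint-eigenfunction identity $\tilde q(t'+[z^{-1}])-\tilde q(t')=z^{-1}(\pa\ldots)$. This is the main technical obstacle. The cleanest route is to compare the Laurent expansions in $z$ and invoke uniqueness: the $z$-dependence of an eigenfunction is generated by its expansion, with coefficients $(\mathcal L^n)_{\ge0}$ acting on $\tilde q$. Substituting this formula reproduces \eqref{mkpbilnear2} after multiplying by $\tau(t)\tau(t')$.

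For the converse, let $(\tau_0,\tau_1)$ be an mKP tau pair. Set $t'=t$ after applying $\pa_{t'}$-type operators to \eqref{mkpbilnear1}. The standard trick is to shift $t'\to t'-[\lambda^{-1}]$ in \eqref{mkpbilnear1} and combine with \eqref{mkpbilnear2}. Eliminating $\tau_1$ then yields the KP bilinear equation for $\tau_0$, and analogously for $\tau_1$. A cleaner alternative is operator-theoretic. The mKP dressing $W=e^{\alpha}+\dots$ gives $L=W\pa W^{-1}$. Then $\mathcal L=e^{-\alpha}Le^{\alpha}$, with $e^{\alpha}=\tau_0/\tau_1$ appropriately normalized, is a KP Lax operator, and it has dressing $e^{-\alpha}W$ whose wave function yields tau function $\tau_0$. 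One checks $(\mathcal L^n)_{\ge0}=e^{-\alpha}(L^n)_{\ge1}e^{\alpha}+\ldots$, which shows $\tilde q=\tau_1/\tau_0$ (up to the gauge) solves $\tilde q_{t_n}=(\mathcal L^n)_{\ge0}(\tilde q)$. This holds because $(L^n)_{\ge1}(1)=0$ gives $\pa_{t_n}$ of the gauge function. The gauge $\tau_1/\tau_0\cdot(\cdot)\,\pa^{-1}$ sends $L$ to a KP operator for $\tau_1$, and this makes $\tau_0/\tau_1$ an adjoint eigenfunction there. The main difficulty is tracking the extra $\pa$ conjugation in $r_{t_n}=-(\pa(L^n)_{\ge1}\pa^{-1})^*(r)$. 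I expect that careful bookkeeping with $(\pa^{-1}A\pa)_{\ge 1}$ versus $A_{\ge0}$ settles it.
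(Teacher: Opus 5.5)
The paper does not prove this proposition; it quotes it from \cite{cheng2018kp,Kac2018,Zabrodin2026}. So your argument has to stand on its own, and its first half does not.

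\textbf{The central identity is misstated.} You state it in two mutually inconsistent ways. In your first paragraph you write $\mathrm{Res}_z w_{\mathrm{KP}}(t,z)\,\Omega(\tilde q(t'),w^*_{\mathrm{KP}}(t',z))=\tilde q(t)-\tilde q(t')$; in the second it equals $\tilde q(t')-\tilde q(t)$. Neither can hold for KP:
\begin{itemize}
\item The flow $\pa_{t'_n}\Omega(\tilde q,w^*_{\mathrm{KP}})=\mathrm{Res}_\pa\bigl(\pa^{-1}w^*_{\mathrm{KP}}(\mathcal L^n)_{\geq 0}\tilde q\,\pa^{-1}\bigr)$ is a finite sum $\sum_j c_j(t')\,\pa_{x'}^j w^*_{\mathrm{KP}}(t',z)$.
\item By the KP bilinear identity, $\mathrm{Res}_z w_{\mathrm{KP}}(t,z)\,\pa_{x'}^j w^*_{\mathrm{KP}}(t',z)=0$.
\item Hence the residue is independent of $t'$, whatever ($z$-dependent) integration constant you choose.
\end{itemize}
With $\Omega$ normalized as $\pa^{-1}\tilde q\,S^{-1*}(e^{-\xi})$ (the normalization of your Fay-type formula), the correct statement is $\mathrm{Res}_z w_{\mathrm{KP}}(t,z)\,\Omega(\tilde q(t'),w^*_{\mathrm{KP}}(t',z))=-\tilde q(t)$. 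To see this, evaluate at $t'=t$ and use $\mathrm{Res}_z P(e^{\xi})\,Q(e^{-\xi})=\mathrm{Res}_\pa(PQ^*)$, which gives $\mathrm{Res}_\pa(-\tilde q\,\pa^{-1})=-\tilde q$.

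If you substitute your version instead, you get
\[
\mathrm{Res}_z\, z^{-1}\tau_0(t-[z^{-1}])\,\tau_1(t'+[z^{-1}])\,e^{\xi(t-t',z)}=\tau_1(t)\tau_0(t')-\tau_1(t')\tau_0(t),
\]
which is not \eqref{mkpbilnear2}; already $\tau=\tilde q=1$ gives $1=0$. The difference form $q(t)-q(t')$ belongs to the mKP identity \eqref{resequation1}. It arises there precisely because $\mathrm{Res}_z w(t',z)w^*(t,z)=1\neq 0$, and it does not transfer to KP.

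\textbf{The Fay-type formula is left unproved.} Even with the sign repaired, the decisive step is only asserted: $\Omega(\tilde q,w^*_{\mathrm{KP}})=-z^{-1}\frac{(\tilde q\tau)(t+[z^{-1}])}{\tau(t)}e^{-\xi(t,z)}$. You flag this yourself as the obstacle and leave it open.
\begin{itemize}
\item Its $x$-derivative check is the three-term identity $\tau_1(t+[z^{-1}])\tau_0(t)-z^{-1}D_x\bigl(\tau_1(t+[z^{-1}])\cdot\tau_0(t)\bigr)=\tau_1(t)\tau_0(t+[z^{-1}])$, with $\tau_0=\tau$, $\tau_1=\tilde q\tau$ and $D_x$ the Hirota derivative. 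This is itself a consequence of the mKP property you are trying to prove.
\item ``Compare Laurent expansions and invoke uniqueness'' says neither what is unique nor how the coefficients $h_n(\tilde\pa)\tilde q$ of $\tilde q(t+[z^{-1}])$ are controlled by the flows $(\mathcal L^n)_{\geq0}(\tilde q)$. Those coefficients involve mixed time derivatives.
\end{itemize}
One way to organize the first half:
\begin{itemize}
\item Set $W=\tilde q^{-1}S$ and verify $W_{t_n}=-(L^n)_{\leq0}W$; the $\pa^0$ term uses $\tilde q_{t_n}=(\mathcal L^n)_{\geq 0}(\tilde q)$.
\item Note $w^*=-\pa^{-1}\tilde q\,S^{-1*}(e^{-\xi})$, so the corrected spectral representation is exactly $\mathrm{Res}_z w(t,z)w^*(t',z)=1$.
\item What remains is identifying the tau pair. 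That needs either the Fay-type formula or an existence theorem for mKP tau pairs.
\end{itemize}

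\textbf{The converse is only half done.} Your $\tau_0$ half via $e^{-\alpha}W$ is correct once you use $\alpha_{t_n}=-(L^n)_{[0]}$, the leading term of $W_{t_n}=-(L^n)_{\leq0}W$. This gives both the KP Sato equation and $\tilde q_{t_n}=(\mathcal L^n)_{\geq 0}(\tilde q)$; it is the paper's Proposition 3, Case 1. The $\tau_1$ half is not done. To see that $e^{-\alpha}\pa W\pa^{-1}$ dresses a KP operator whose tau function is $\tau_1$, you need $w_x=z\frac{\tau_0\tau_1(t-[z^{-1}])}{\tau_1^2}e^{\xi(t,z)}$, and that must be derived from \eqref{mkpbilnear1}--\eqref{mkpbilnear2}. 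Neither ``careful bookkeeping'' nor the unspecified ``shift and eliminate $\tau_1$'' supplies it.
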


\begin{remark}
  Here we say that the KP eigenfunction $\tilde{q}$ and the adjoint eigenfunction $\tilde{r}$ are corresponding to the KP tau function $\tau_{\text {\tiny \rm{KP}}}$, meaning that $\tilde{q}_{t_n}=(\mathcal{L}^n)_{\geq 0}(\tilde{q})$ and  $\tilde{r}_{t_n}=-(\mathcal{L}^n)_{\geq 0}^*(\tilde{r})$, where $\mathcal{L}=S\pa S^{-1}$ and $S=\sum_{i\geq 0}\frac{h_i(-\tilde{\pa})\tau_{\rm{KP}}(t)}{\tau_{\rm{KP}}(t)}\pa^{-i}$. Here $\tilde{\pa}=(\pa_x,\pa_{t_2}/2,\pa_{t_3}/3,\ldots)$ and $\exp(\xi(t,z))=\sum_{j=0}^{+\infty}h_j(t)z^j$.
\end{remark}

\begin{proposition}\cite{Shaw1997,Yang2022}
  Given the KP Lax operator $\mathcal{L}$, the KP eigenfunctions $\tilde{\phi}$ and $\tilde{q}$, and the KP adjoint eigenfunctions $\tilde{\psi}$ and $\tilde{r}$, we have the following results.

   If set $L=\tilde{\phi}^{-1}\mathcal{L}\tilde{\phi},\ q=\tilde{\phi}^{-1}\tilde{q},\ r=\Omega(\tilde{\phi}, \tilde{r})$, or
     $L=\pa^{-1}\tilde{\psi}\mathcal{L}\tilde{\psi}^{-1}\pa,\ q=\Omega(\tilde{q}, \tilde{\psi}),\ r=\tilde{\psi}^{-1}\tilde{r}$,
    then $L$ is the mKP Lax operator, $q$ is the mKP eigenfunction and $r$ is the mKP adjoint eigenfunction, satisfying $q_{t_n}=(L^n)_{\geq 1}(q)$ and $r_{t_n}=-(\pa(L^n)_{\geq 1}\pa^{-1})^*(r)$.
 \end{proposition}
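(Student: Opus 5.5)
The plan is to verify both gauge transformations by direct computation with pseudo-differential operators. Throughout, write $B_n=(\mathcal{L}^n)_{\geq 0}$, so that $\mathcal{L}_{t_n}=[B_n,\mathcal{L}]$, $\tilde{\phi}_{t_n}=B_n(\tilde{\phi})$, $\tilde{q}_{t_n}=B_n(\tilde{q})$, $\tilde{r}_{t_n}=-B_n^*(\tilde r)$ and $\tilde{\psi}_{t_n}=-B_n^*(\tilde\psi)$. In each case the key step is a projection identity expressing $(L^n)_{\geq1}$ through $B_n$ conjugated by the gauge operator, corrected by a single extra term. Once that identity is available, the Lax equation and the two eigenfunction equations follow by differentiating the definitions.

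\textbf{First case, $L=\tilde\phi^{-1}\mathcal{L}\tilde\phi$.} Conjugation by a function preserves the order decomposition, so $L^n=\tilde\phi^{-1}\mathcal{L}^n\tilde\phi$ and $(L^n)_{\geq0}=\tilde\phi^{-1}B_n\tilde\phi$. The $\pa^0$ coefficient of the differential operator $\tilde\phi^{-1}B_n\tilde\phi$ is $\tilde\phi^{-1}B_n(\tilde\phi)=\tilde\phi^{-1}\tilde\phi_{t_n}$. Hence
\begin{align*}
(L^n)_{\geq1}=\tilde\phi^{-1}B_n\tilde\phi-\tilde\phi^{-1}\tilde\phi_{t_n}.
\end{align*}
\emph{Lax equation.} Differentiating $L$ gives $L_{t_n}=\tilde\phi^{-1}[B_n,\mathcal{L}]\tilde\phi-[\tilde\phi^{-1}\tilde\phi_{t_n},L]$, which equals $[(L^n)_{\geq1},L]$ by the identity above.

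\emph{Eigenfunction.} For $q=\tilde\phi^{-1}\tilde q$ we get $q_{t_n}=\tilde\phi^{-1}B_n(\tilde q)-\tilde\phi^{-2}\tilde\phi_{t_n}\tilde q=(L^n)_{\geq1}(q)$.

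\emph{Adjoint eigenfunction.} Since $\pa^*=-\pa$, we have $(\pa A\pa^{-1})^*=\pa^{-1}A^*\pa$, so the required equation reads $r_{t_n}=-\pa^{-1}A^*(r_x)$ with $A=(L^n)_{\geq1}$. By the definition of $\Omega$, $r_x=\tilde\phi\tilde r$. Moreover $A^*=\tilde\phi B_n^*\tilde\phi^{-1}-\tilde\phi^{-1}\tilde\phi_{t_n}$, so $-A^*(\tilde\phi\tilde r)=\tilde\phi\tilde r_{t_n}+\tilde\phi_{t_n}\tilde r=(\tilde\phi\tilde r)_{t_n}$. This is exactly $\pa_x$ of $\Omega(\tilde\phi,\tilde r)_{t_n}$, and the integration constant is fixed compatibly with \eqref{omegafg}.

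\textbf{Second case, $L=\pa^{-1}\tilde\psi\mathcal{L}\tilde\psi^{-1}\pa$.} Here $L^n=\pa^{-1}\tilde\psi\mathcal{L}^n\tilde\psi^{-1}\pa$. Write $\tilde\psi\mathcal{L}^n\tilde\psi^{-1}=\tilde\psi B_n\tilde\psi^{-1}+(\text{order}<0)$. Conjugating the negative part by $\pa$ keeps it of negative order, so only $\pa^{-1}\tilde\psi B_n\tilde\psi^{-1}\pa$ contributes to $(L^n)_{\geq1}$. Decompose $\tilde\psi B_n\tilde\psi^{-1}=C\pa+c_0$ with $c_0=\tilde\psi B_n(\tilde\psi^{-1})$. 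Then $\pa^{-1}C\pa^2$ is a differential operator with no $\pa^0$ term, while $\pa^{-1}c_0\pa=c_0-\pa^{-1}c_{0,x}$. This yields
\begin{align*}
(L^n)_{\geq1}=\pa^{-1}\tilde\psi B_n\tilde\psi^{-1}\pa-c_0+(\text{negative part of }\pa^{-1}C\pa^{2}),
\end{align*}
and the correction terms are to be written as $\pa^{-1}(\cdot)\pa$ conjugates of multiplication operators. Using $(\tilde\psi^{-1})_{t_n}=\tilde\psi^{-2}B_n^*(\tilde\psi)$, one checks, as in the first case, the following three identities:
\begin{align*}
L_{t_n}&=[(L^n)_{\geq1},L],\\
r_{t_n}&=(\tilde\psi^{-1}\tilde r)_{t_n}=-\big(\pa(L^n)_{\geq1}\pa^{-1}\big)^*(r),\\
q_{t_n}&=(L^n)_{\geq1}(q),\quad\text{where } q_x=\tilde q\tilde\psi .
\end{align*}
A cleaner route for this case is to note that the map $\mathcal L\mapsto\pa^{-1}\tilde\psi\mathcal L\tilde\psi^{-1}\pa$ is, up to adjoints, the first gauge composed with the KP/mKP adjoint symmetry. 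The mKP statement can then be pulled back to the first case applied to $\mathcal L^*$ with eigenfunction $\tilde\psi$.

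\textbf{Main obstacle.} The main obstacle is the projection bookkeeping in the second case. Conjugating by $\pa^{\pm1}$ mixes the $\pa^0$ and negative parts, and one must confirm that the resulting correction to $(L^n)_{\geq1}$ is a pure-gauge term that cancels exactly in both the Lax equation and the eigenfunction equations. I would try the adjoint-symmetry reduction first, and fall back on the explicit decomposition $C\pa+c_0$ if that reduction does not go through.
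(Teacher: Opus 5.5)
Your first case is correct and complete; the paper itself gives no proof of this proposition and simply quotes it from the literature. The identity $(L^n)_{\geq1}=\tilde\phi^{-1}B_n\tilde\phi-\tilde\phi^{-1}\tilde\phi_{t_n}$ is right, and the Lax, eigenfunction and adjoint-eigenfunction checks go through as you wrote them. The second case, however, is not proved.

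\textbf{The gap in the second case.} Your claim that $\pa^{-1}C\pa^2$ is a differential operator with no $\pa^0$ term is false. Already for $C=c$ a function,
\[
\pa^{-1}c\pa^2=c\pa-c_x+\pa^{-1}c_{xx},
\]
and for $n=2$ one has $C=\pa-2\tilde\psi_x/\tilde\psi$. Your displayed formula for $(L^n)_{\geq1}$ is therefore wrong. It also drops the term $+\pa^{-1}c_{0,x}$ coming from $\pa^{-1}c_0\pa$. For $\mathcal L=\pa$, $n=1$ it returns $\pa+\pa^{-1}(\tilde\psi_x/\tilde\psi)_x$ instead of $\pa$. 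After this, the three identities are only asserted.

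The adjoint-symmetry alternative is viable, but it is only sketched and, as stated, imprecise. $\mathcal L^*$ has leading term $-\pa$, so you would need $-\mathcal L^*$ with $t_n\mapsto(-1)^{n+1}t_n$. You would also need a separate proof that $L\mapsto-\pa^{-1}L^*\pa$, with the same time change, is an mKP symmetry exchanging eigenfunctions and adjoint eigenfunctions. That proof rests on $(\pa^{-1}A^*\pa)_{\geq1}=\pa^{-1}(A_{\geq1})^*\pa$.

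\textbf{The missing projection identity.} For any differential operator $P$ one has $P=P^*(1)+\pa Q$ with $Q$ differential. Also, $\pa^{-1}h\pa=h-\pa^{-1}h_x$ has order $\le0$. Hence
\[
(\pa^{-1}P\pa)_{\geq1}=\pa^{-1}\big(P-P^*(1)\big)\pa .
\]
So the correction is governed by the zeroth-order coefficient of $P^*$, not of $P$ (your $c_0$). With $P=\tilde\psi B_n\tilde\psi^{-1}$ you get $P^*(1)=\tilde\psi^{-1}B_n^*(\tilde\psi)=-\tilde\psi^{-1}\tilde\psi_{t_n}$, hence
\[
(L^n)_{\geq1}=\pa^{-1}X\pa,\qquad X=\tilde\psi B_n\tilde\psi^{-1}+\tilde\psi^{-1}\tilde\psi_{t_n}=\pa Q .
\]
This is the exact analogue of your case 1, and each identity then follows in one line:
\begin{itemize}
\item The Lax equation follows by conjugating $(\tilde\psi\mathcal L\tilde\psi^{-1})_{t_n}=[X,\tilde\psi\mathcal L\tilde\psi^{-1}]$ by $\pa$.
\item Since $\pa^{-1}(L^n)_{\geq1}^*\pa=X^*=\tilde\psi^{-1}B_n^*\tilde\psi+\tilde\psi^{-1}\tilde\psi_{t_n}$, you get $-X^*(\tilde\psi^{-1}\tilde r)=(\tilde\psi^{-1}\tilde r)_{t_n}$, which is the equation for $r$.
\item Since $(L^n)_{\geq1}=Q\pa$, you get $\pa_x\big((L^n)_{\geq1}(q)\big)=X(\tilde q\tilde\psi)=(\tilde q\tilde\psi)_{t_n}=(q_{t_n})_x$. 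The integration constant is fixed by \eqref{omegafg}, as in your case 1.
\end{itemize}
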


\begin{proposition}\cite{Shaw1997,Yang2022}
  Given the mKP Lax operator $L=\pa+\sum_{i=0}^{\infty}v_{i}\pa^{-i}$, the mKP tau pair $(\tau_0,\tau_1)$, the mKP eigenfunction $q$ and the mKP adjoint eigenfunction $r$, we have the results below.

  $\bullet$ ~ $\textbf{Case 1:}$
   If set
   \begin{align*}
   \mathcal{L}=e^{\int v_0 dx}Le^{-\int v_0 dx},\quad \tilde{q}=e^{\int v_0 dx}q, \quad \tilde{r}= e^{-\int v_0 dx}r_x,
   \end{align*}
  then $\mathcal{L}$ is the KP Lax operator, $\tilde{q}$ is the KP eigenfunction and $\tilde{r}$ is the KP adjoint eigenfunction, corresponding to the KP tau function $\tau_0$. In particular, $e^{\int v_0 dx}$ is also the corresponding KP eigenfunction, since 1 is the mKP eigenfunction.

    $\bullet$ ~ $\textbf{Case 2:}$
  If set
   \begin{align*}
   \mathcal{L}=e^{\int v_0 dx}\pa L \pa^{-1} e^{-\int v_0 dx},\quad \tilde{q}=e^{\int v_0 dx}q_x, \quad \tilde{r}= e^{-\int v_0 dx}r,
   \end{align*}
  then $\mathcal{L}$ is the KP Lax operator, $\tilde{q}$ is the KP eigenfunction and $\tilde{r}$ is the KP adjoint eigenfunction, corresponding to the KP tau function $\tau_1$. In particular, $e^{-\int v_0 dx}$ is also the corresponding KP adjoint eigenfunction, since 1 is the mKP adjoint eigenfunction.
\end{proposition}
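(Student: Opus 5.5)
The plan is to reduce both cases to the dressing-operator description of the two hierarchies. I will build an explicit KP dressing operator from the mKP dressing operator $W$ and then check the KP flows, the tau function and the (adjoint) eigenfunction equations directly.

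\textbf{Case 1.} Write $W=e^\alpha+\sum_{i\ge1}\beta_i\pa^{-i}$.
\begin{itemize}
\item From $L=W\pa W^{-1}$ the zeroth-order coefficient is $v_0=-\alpha_x$. Hence $e^{\int v_0dx}=e^{-\alpha}$, and the integration constant is fixed exactly by this choice.
\item Put $S=e^{-\alpha}W=1+\sum_{i\ge1}e^{-\alpha}\beta_i\pa^{-i}$. Then $\mathcal{L}=S\pa S^{-1}=e^{-\alpha}Le^{\alpha}$ has the KP form $\pa+\sum_{j\ge1}u_{j+1}\pa^{-j}$.
\item Conjugation by a function commutes with taking the $\le0$, $\ge1$ and $[0]$ parts. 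So $e^{-\alpha}(L^n)_{\le0}e^{\alpha}=(\mathcal{L}^n)_{\le0}$, and $(\mathcal{L}^n)_{[0]}=(L^n)_{[0]}$.
\item The leading term of $W_{t_n}=-(L^n)_{\le0}W$ gives $\alpha_{t_n}=-(L^n)_{[0]}$. Therefore
\[
S_{t_n}=-\alpha_{t_n}S-e^{-\alpha}(L^n)_{\le0}W=\bigl((\mathcal{L}^n)_{[0]}-(\mathcal{L}^n)_{\le0}\bigr)S=-(\mathcal{L}^n)_{<0}S,
\]
which is the KP Sato equation, and the Lax equation for $\mathcal{L}$ follows.
\item For the tau function, let $z\to\infty$ in $w(t,z)=W(e^{\xi(t,z)})=\frac{\tau_0(t-[z^{-1}])}{\tau_1(t)}e^{\xi(t,z)}$. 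This gives $e^{\alpha}=\tau_0/\tau_1$. Hence $S(e^{\xi})=\frac{\tau_0(t-[z^{-1}])}{\tau_0(t)}e^{\xi}$, the KP wave function of $\tau_0$.
\item For $\tilde q=e^{-\alpha}q$:
\[
(\mathcal{L}^n)_{\ge0}(\tilde q)=e^{-\alpha}(L^n)_{\ge0}(q)=e^{-\alpha}\bigl((L^n)_{\ge1}(q)+(L^n)_{[0]}q\bigr),
\]
and this equals $\tilde q_{t_n}=-\alpha_{t_n}\tilde q+e^{-\alpha}q_{t_n}$.
\item For the adjoint, differentiate the $r$-flow to get $r_{x,t_n}=-(L^n)^*_{\ge1}(r_x)$. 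Use $(\mathcal{L}^n)_{\ge0}^*=e^{\alpha}(L^n)_{\ge0}^*e^{-\alpha}$ and the scalar identity $\bigl((L^n)_{[0]}\bigr)^*=(L^n)_{[0]}$. The same computation then shows $\tilde r=e^{\alpha}r_x$ satisfies $\tilde r_{t_n}=-(\mathcal{L}^n)^*_{\ge0}(\tilde r)$.
\item Taking $q=1$ shows $e^{-\alpha}=e^{\int v_0dx}$ is itself a KP eigenfunction.
\end{itemize}

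\textbf{Case 2.} Repeat the argument with $\hat W=\pa W\pa^{-1}$, which again has leading coefficient $e^{\alpha}$.
\begin{itemize}
\item Set $S=e^{-\alpha}\pa W\pa^{-1}$. Then $\mathcal{L}=S\pa S^{-1}=e^{-\alpha}\pa L\pa^{-1}e^{\alpha}$.
\item The Sato equation follows from $\hat W_{t_n}=-\pa(L^n)_{\le0}\pa^{-1}\hat W$. Since $(\pa A\pa^{-1})_{\le0}=\pa A_{\le0}\pa^{-1}$ holds only up to the $[0]$-coefficient, I will use the identity $(\pa A_{\le0}\pa^{-1})_{[0]}=A_{[0]}$ to absorb the $\alpha_{t_n}$ term exactly as in Case 1.
\item For the tau function, use the adjoint wave function. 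We have
\[
(S^{-1})^*(e^{-\xi})=\bigl(e^{\alpha}(W^{-1}\pa^{-1})^*(-\pa)\bigr)(e^{-\xi})=z\,e^{\alpha}w^*(t,z),
\]
where $(\pa^{-1})^*=-\pa^{-1}$ and $\pa(e^{-\xi})=-ze^{-\xi}$. By \eqref{adwavefunction} and $e^{\alpha}=\tau_0/\tau_1$, this equals $\frac{\tau_1(t+[z^{-1}])}{\tau_1(t)}e^{-\xi}$. This is the KP adjoint wave function of $\tau_1$, so $\tau_1$ is the corresponding KP tau function.
\item For the eigenfunctions, apply $\pa$ to the $q$-flow: $q_{x,t_n}=\bigl(\pa(L^n)_{\ge1}\pa^{-1}\bigr)(q_x)$. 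Note that $\pa(L^n)_{\ge1}\pa^{-1}$ is a differential operator. For $r$ the flow already has the conjugated form. The checks for $\tilde q=e^{-\alpha}q_x$ and $\tilde r=e^{\alpha}r$ then proceed as in Case 1, with $q=1$ replaced by $r=1$, since $1$ is an mKP adjoint eigenfunction.
\end{itemize}

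\textbf{Main obstacle.} I expect the hardest step to be the careful bookkeeping of the $[0]$-coefficients and adjoints in Case 2. There the projections do not commute naively with conjugation by $\pa$, and the signs and powers of $z$ from $(\pa^{-1})^*=-\pa^{-1}$ must be tracked. I would handle this first by proving the two small lemmas used above:
\[
(fAf^{-1})_{\ge j}=fA_{\ge j}f^{-1},\qquad \bigl(\pa A_{\ge1}\pa^{-1}\bigr)_{\ge0}=\pa A_{\ge1}\pa^{-1}.
\]
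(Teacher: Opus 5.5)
Your dressing-operator strategy is sound. It is also a genuinely independent route, since the paper gives no proof of this proposition and only cites \cite{Shaw1997,Yang2022}. What it buys is a direct settlement of the ``corresponding to $\tau_0$ / $\tau_1$'' clauses: $e^{\alpha}=\tau_0/\tau_1$ from \eqref{adwavefunction} turns $e^{-\alpha}w$ into the KP wave function of $\tau_0$, and $z e^{\alpha}w^*$ into the KP adjoint wave function of $\tau_1$. These identifications are correct, including the sign bookkeeping $(S^{-1})^*=e^{\alpha}\pa^{-1}(W^{-1})^*\pa$. So are $v_0=-\alpha_x$, $\alpha_{t_n}=-(L^n)_{[0]}$, and all four (adjoint) eigenfunction checks.

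The one real error is your projection lemma. Conjugation by a function does \emph{not} commute with the $\ge 1$/$\le 0$ splitting, nor with taking the $[0]$-coefficient, because $f\pa f^{-1}=\pa-f_x/f$. For $n=1$ you have $\mathcal{L}_{[0]}=0$ but $L_{[0]}=v_0=-\alpha_x$. So both displayed claims, $e^{-\alpha}(L^n)_{\le0}e^{\alpha}=(\mathcal{L}^n)_{\le0}$ and $(\mathcal{L}^n)_{[0]}=(L^n)_{[0]}$, are false. Likewise the lemma $(fAf^{-1})_{\ge j}=fA_{\ge j}f^{-1}$ that you planned to prove first holds only for $j=0$. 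Your Sato equation comes out right only because the two false identities cancel.

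Replace them with the true facts $(fAf^{-1})_{\ge0}=fA_{\ge0}f^{-1}$ and $(fAf^{-1})_{<0}=fA_{<0}f^{-1}$, together with the observation that the scalar $(L^n)_{[0]}$ commutes with $e^{\pm\alpha}$. This gives
\[
e^{-\alpha}(L^n)_{\le0}e^{\alpha}=(\mathcal{L}^n)_{<0}+(L^n)_{[0]},
\]
and combined with $\alpha_{t_n}=-(L^n)_{[0]}$ you obtain $S_{t_n}=-(\mathcal{L}^n)_{<0}S$ legitimately.

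In Case 2, do not argue ``exactly as in Case 1.'' Use instead that $\pa(L^n)_{\le0}\pa^{-1}$ has order $\le 0$ with $[0]$-coefficient $(L^n)_{[0]}$, while $\pa(L^n)_{\ge1}\pa^{-1}$ is a differential operator. These give $e^{-\alpha}\pa(L^n)_{\le0}\pa^{-1}e^{\alpha}=(\mathcal{L}^n)_{<0}+(L^n)_{[0]}$ and $(\mathcal{L}^n)_{\ge0}=e^{-\alpha}\bigl(\pa(L^n)_{\ge1}\pa^{-1}+(L^n)_{[0]}\bigr)e^{\alpha}$; the latter is exactly what your Case 2 eigenfunction checks need. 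Your eigenfunction computations only ever used the valid $j=0$ case, so they survive unchanged.
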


\subsection{KP Darboux transformations}
In what follows, we present some fundamental facts concerning the KP Darboux transformations (please refer to \cite{Yang2022,He2002,Zabrodin225}).
\begin{proposition}\cite{Yang2022,He2002,Zabrodin225}
Given the KP tau function $\tau_{\text {\tiny \rm{KP}}}$, the KP Lax operator $\mathcal{L}$, the KP eigenfunction $\tilde{\phi}$ and $\tilde{q}$, and the KP adjoint eigenfunction $\tilde{\psi}$ and $\tilde{r}$,

$\bullet$ ~ $\textbf{Case 1:}$
 when $\tau_{\text {\tiny \rm{KP}}} \to \tau_{\text {\tiny \rm{KP}}}^{[1]}=\tilde{\phi}\tau_{\text {\tiny \rm{KP}}}$,
 \begin{align*}
   L \to L^{[1]}=&T_d(\tilde{\phi})LT_d(\tilde{\phi})^{-1},\quad T_d(\tilde{\phi})=\tilde{\phi}\pa \tilde{\phi}^{-1},\\
   \tilde{q} \to \tilde{q}^{[1]}=&T_d(\tilde{\phi})(\tilde{q}), \quad \tilde{r} \to \tilde{r}^{[1]}=T_d(\tilde{\phi})^{-1*}(\tilde{r}).
 \end{align*}

$\bullet$ ~ $\textbf{Case 2:}$
 when $\tau_{\text {\tiny \rm{KP}}} \to \tau_{\text {\tiny \rm{KP}}}^{[1]}=\tilde{\psi}\tau_{\text {\tiny \rm{KP}}}$,
 \begin{align*}
    L \to L^{[1]}=&T_i(\tilde{\psi})LT_i(\tilde{\psi})^{-1}, \quad T_i(\tilde{\psi})=\tilde{\psi}^{-1}\pa^{-1}\tilde{\psi},\\
   \tilde{q} \to \tilde{q}^{[1]}=&T_i(\tilde{\psi})(\tilde{q}), \quad \tilde{r} \to \tilde{r}^{[1]}=T_i(\tilde{\psi})^{-1*}(\tilde{r}).
 \end{align*}
Here $L^{[1]}$, $q^{[1]}$ and $r^{[1]}$ are the new KP Lax operator, the new KP eigenfunction and the new KP adjoint eigenfunction respectively, corresponding to the new KP tau function $\tau_{\text {\tiny \rm{KP}}}^{[1]}$.
\end{proposition}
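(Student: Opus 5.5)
The plan is to verify the statement directly at the level of Lax operators, and only then identify the tau function. I treat Case 1 in detail; Case 2 is dual. Throughout I read $L$ in the statement as the KP Lax operator $\mathcal{L}$, and write $T=T_d(\tilde\phi)=\tilde\phi\pa\tilde\phi^{-1}$ and $\mathcal{L}^{[1]}=T\mathcal{L}T^{-1}$.

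\textbf{Step 1 (form of $\mathcal{L}^{[1]}$).} Since $T$ is monic of order one, $\mathcal{L}^{[1]}=\pa+\sum_{j\geq 0}u^{[1]}_{j+1}\pa^{-j}$. The coefficient of $\pa^0$ vanishes because the residue $\mathrm{Res}_\pa$ of a commutator is a total $x$-derivative. Concretely, with dressing operator $S^{[1]}=TS\pa^{-1}$, which is monic of order $0$, we get $\mathcal{L}^{[1]}=S^{[1]}\pa S^{[1]-1}$, and this operator has no $\pa^0$ term.

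\textbf{Step 2 (the key identity).} I will show
\[
(\mathcal{L}^{[1]n})_{\geq 0}=T(\mathcal{L}^n)_{\geq 0}T^{-1}+T_{t_n}T^{-1}.
\]
This single identity gives the Lax equation for $\mathcal{L}^{[1]}$ immediately. To prove it, write $B_n=(\mathcal{L}^n)_{\geq 0}$. The right-hand side equals $T\mathcal{L}^nT^{-1}-T(\mathcal{L}^n)_{<0}T^{-1}+T_{t_n}T^{-1}$.
\begin{itemize}
\item The term $T(\mathcal{L}^n)_{<0}T^{-1}$ has order $\leq 0$. Its $\pa^0$ part can be computed from $T^{-1}=\tilde\phi\pa^{-1}\tilde\phi^{-1}$ and the action of $(\mathcal{L}^n)_{<0}$ on $\tilde\phi$.
\item Using $\tilde\phi_{t_n}=B_n(\tilde\phi)$, we get $T_{t_n}T^{-1}=[\tilde\phi_{t_n}\tilde\phi^{-1},T]T^{-1}$, which is of order $0$.
\end{itemize}
The cleanest route is to check that $TB_nT^{-1}+T_{t_n}T^{-1}$ is a differential operator. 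Its negative part is $\tilde\phi\pa\tilde\phi^{-1}B_n\tilde\phi\pa^{-1}\tilde\phi^{-1}$ minus the analogous term coming from $T_{t_n}$. Taking negative parts via the standard formula $(A\pa^{-1}f)_{<0}=A(f)\pa^{-1}$, the term $B_n(\tilde\phi)\pa^{-1}\tilde\phi^{-1}$ cancels exactly against the one arising from $\tilde\phi_{t_n}=B_n(\tilde\phi)$.

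\textbf{Step 3 (eigenfunctions).} For the eigenfunction, differentiate $T(\tilde q)=\tilde\phi(\tilde q/\tilde\phi)_x$ in $t_n$. Then
\[
\tilde q^{[1]}_{t_n}=T_{t_n}(\tilde q)+TB_n(\tilde q)=(\mathcal{L}^{[1]n})_{\geq 0}(\tilde q^{[1]})
\]
by Step 2. For the adjoint eigenfunction, use the adjoint of Step 2 applied to $\tilde r^{[1]}=T^{-1*}(\tilde r)$, together with $(T^{-1})_{t_n}=-T^{-1}T_{t_n}T^{-1}$.

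\textbf{Step 4 (tau function).} From $S^{[1]}=TS\pa^{-1}$, the new wave function is $T(w)/z$, and Step 2 shows $S^{[1]}$ satisfies the Sato equation. To identify $\tau^{[1]}=\tilde\phi\tau_{\rm KP}$, I will compare coefficients of $\pa^{-1}$. We have $\mathrm{Res}_\pa\mathcal{L}^{[1]}=\mathrm{Res}_\pa\mathcal{L}+(\log\tilde\phi)_{xx}$, i.e.\ $u_2^{[1]}=u_2+\pa_x^2\log\tilde\phi$. Combined with $u_2=\pa_x^2\log\tau$, this gives $\tau^{[1]}=\tilde\phi\tau$ up to the usual gauge ambiguity. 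The gauge is fixed by the higher flows, using $\pa_{t_n}\pa_x\log\tau=\mathrm{Res}_\pa\mathcal{L}^n$ and the analogous formula for $\mathcal{L}^{[1]}$ with the $t_n$-derivative of $\tilde\phi$. Alternatively, one can use the Fay-type identity expressing $\tilde\phi$ as a residue of $w$ against a spectral density.

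\textbf{Case 2.} Repeat Steps 1--4 with $T=\tilde\psi^{-1}\pa^{-1}\tilde\psi$ and $S^{[1]}=TS\pa$. The key identity is the same, now using $\tilde\psi_{t_n}=-B_n^*(\tilde\psi)$ and the formula $(\pa^{-1}fA)_{<0}=\pa^{-1}A^*(f)$.

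\textbf{Main obstacle.} I expect the hard parts to be, first, the precise cancellation in Step 2 establishing that $T(\mathcal{L}^n)_{\geq0}T^{-1}+T_{t_n}T^{-1}$ is purely differential, and second, the tau-function normalization in Step 4, where the integration constants in $t$ must be shown to be trivial.
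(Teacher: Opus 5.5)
The paper gives no proof of this proposition. It is quoted from \cite{Yang2022,He2002,Zabrodin225}, so your argument has to stand on its own. Steps 1--3 are the standard Lax-level verification and are correct in substance, but a few details need fixing. Both $T(\mathcal{L}^n)_{<0}T^{-1}$ and $T_{t_n}T^{-1}$ have order $\le -1$, not $0$; indeed $T_{t_n}=-\pa_x(\tilde\phi_{t_n}/\tilde\phi)$ is a multiplication operator. This is exactly why the key identity reduces to $(TB_nT^{-1})_{<0}=-T_{t_n}T^{-1}$. The projection rule you need is $(Af\pa^{-1}g)_{<0}=A(f)\pa^{-1}g$, which gives $(TB_nT^{-1})_{<0}=T\big(B_n(\tilde\phi)\big)\pa^{-1}\tilde\phi^{-1}=\tilde\phi\,(\tilde\phi_{t_n}/\tilde\phi)_x\,\pa^{-1}\tilde\phi^{-1}$. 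Also, wherever $\pa^{-1}$ acts on a function, $\pa^{-1}$ must be the squared eigenfunction potential, fixed up to a genuine constant. This happens in $T_d(\tilde\phi)^{-1*}(\tilde r)=-\tilde\phi^{-1}\pa^{-1}(\tilde\phi\tilde r)$ and in $T_i(\tilde\psi)(\tilde q)=\tilde\psi^{-1}\pa^{-1}(\tilde\psi\tilde q)$; for example, use $\Omega(\tilde\phi,\tilde r)$ with $\Omega_{t_n}=\mathrm{Res}_\pa(\pa^{-1}\tilde r B_n\tilde\phi\pa^{-1})$. Without this choice, the rule $(T^{-1})_{t_n}=-T^{-1}T_{t_n}T^{-1}$ cannot be applied to functions, and the (adjoint) eigenfunction property fails.

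\textbf{The genuine gap is Step 4.} Write $\tau=\tau_{\mathrm{KP}}$, and let $\hat\tau$ be the Sato tau function of $S^{[1]}$. Your formula $\mathrm{Res}_\pa\mathcal{L}^{[1]n}=\mathrm{Res}_\pa\mathcal{L}^n+\pa_x\pa_{t_n}\log\tilde\phi$ is correct for all $n$, and likewise with $\tilde\psi$ in Case 2. But all it yields is $\pa_x\pa_{t_n}\log\big(\hat\tau/(\tilde\phi\tau)\big)=0$ for $n\ge1$. That means $\hat\tau=\tilde\phi\tau\,e^{cx+b(t_2,t_3,\ldots)}$ with $b$ completely arbitrary. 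This is not the usual gauge $e^{c_0+\sum_i c_it_i}$. Moreover, multiplication by $e^{b}$ is not a symmetry of the Hirota equations: already the KP equation $F_{xxxx}+6F_{xx}^2+3F_{t_2t_2}-4F_{xt_3}=0$ for $F=\log\tau$ forces $b_{t_2t_2}=0$. So nothing you wrote shows that $\tilde\phi\tau$ satisfies the KP bilinear identity, which is the real content of the proposition. Mixed $x$-derivatives of $\log\tau$ are Lax-level data and cannot detect this; it is the same issue the paper flags in the Remark after Corollary 1.

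What is missing is the following wave-function identity, where $w(t,z)=\frac{\tau(t-[z^{-1}])}{\tau(t)}e^{\xi(t,z)}$ is the KP wave function:
\[
\tfrac{1}{z}\,T_d(\tilde\phi)\big(w(t,z)\big)=\frac{\tilde\phi(t-[z^{-1}])}{\tilde\phi(t)}\,w(t,z),\qquad z\,\tilde\psi^{-1}\,\Omega\big(w(t,z),\tilde\psi\big)=\frac{\tilde\psi(t-[z^{-1}])}{\tilde\psi(t)}\,w(t,z).
\]
Here $\Omega$ is normalised by letting $\pa^{-1}\tilde\psi S$ act formally on $e^{\xi(t,z)}$. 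These identities pin down every coefficient of $S^{[1]}$, hence every $\pa_{t_n}\log\tau^{[1]}$, and not just the mixed $x$-derivatives. Both follow from the formula for $w(t,z)_x$ in Section 5, applied to the mKP tau pairs $(\tau,\tilde\phi\tau)$ and $(\tilde\psi\tau,\tau)$ of the paper's Proposition 1. Ultimately they come from the bilinear identity, or from fermions as in \cite{Yang2022}. So the Fay-type route you list as an ``alternative'' is the indispensable step, not an option.

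Once that bilinear input is available, your residue computation does finish the argument. If $\tilde\phi\tau$ (resp.\ $\tilde\psi\tau$) is known to be a tau function with Lax operator $\mathcal{L}'$, then $\mathrm{Res}_\pa\mathcal{L}'^n=\pa_x\pa_{t_n}\log(\tilde\phi\tau)=\mathrm{Res}_\pa\mathcal{L}^{[1]n}$ for all $n$. Since $\mathrm{Res}_\pa\mathcal{L}^n=nu_{n+1}+(\text{differential polynomial in }u_2,\ldots,u_n)$, these residues determine the Lax operator triangularly, so $\mathcal{L}'=\mathcal{L}^{[1]}$.
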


\begin{proposition}\cite{Yang2022,He2002}
    Given a group of independent KP eigenfunctions $\phi,\phi_1,\ldots,\phi_M$ and a group of independent KP adjoint eigenfunctions $\psi,\psi_1,\ldots,\psi_N$, corresponding to the KP tau function $\tau_{\text {\tiny \rm{KP}}}$,
    if denote $\phi^{[M+N]}$, $\psi^{[M+N]}$ and $\tau^{[M+N]}$ as follows:
    \begin{itemize}
\item when $M > N$,
     \begin{align*}
        \phi^{[M+N]} =& \frac{IW_{N,M+1}(\psi_{\overrightarrow{N}}; \phi_{\overleftarrow{M}}, \phi)}{IW_{N,M}(\psi_{\overrightarrow{N}}; \phi_{\overleftarrow{M}})},\quad \psi^{[M+N]} = \frac{(-1)^M IW_{N+1,M}(\psi, \psi_{\overrightarrow{N}}; \phi_{\overleftarrow{M}})}{IW_{N,M}(\psi_{\overrightarrow{N}}; \phi_{\overleftarrow{M}})},\\ \tau^{[M+N]}=&(-1)^{NM}IW_{N,M}(\psi_{\overrightarrow{N}};\phi_{\overleftarrow{M}})\tau;
     \end{align*}

\item when $M = N$,
     \begin{align*}
     \phi^{[2M]} &= \frac{IW_{M,M+1}(\psi_{\overrightarrow{M}}; \phi_{\overleftarrow{M}},\phi)}{IW_{M,M}(\psi_{\overrightarrow{M}}; \phi_{\overleftarrow{M}})}, \quad
     \psi^{[2M]} = \frac{(-1)^M IW_{M,M+1}(\phi_{\overleftarrow{M}}; \psi,\psi_{\overrightarrow{M}})}{IW_{M,M}(\psi_{\overrightarrow{M}};\phi_{\overleftarrow{M}} )}, \\
     \tau^{[2M]} &= (-1)^M IW_{M,M}(\psi_{\overrightarrow{M}}; \phi_{\overleftarrow{M}})\tau;
     \end{align*}

\item when $M < N$,
     \begin{align*}
        \phi^{[M+N]} =& \frac{(-1)^M IW_{M+1,N}(\phi,\phi_{\overrightarrow{M}}; \psi_{\overleftarrow{N}})}{IW_{M,N}(\phi_{\overrightarrow{M}}; \psi_{\overleftarrow{N}})},\quad
       \psi^{[M+N]} = \frac{(-1)^{M+N} IW_{M,N+1}(\phi_{\overrightarrow{M}}; \psi_{\overleftarrow{N}}, \psi)}{IW_{M,N}(\phi_{\overrightarrow{M}}; \psi_{\overleftarrow{N}})} \\
       \tau^{[M+N]}=&(-1)^{MN+\frac{M(M-1)}{2}+\frac{N(N-1)}{2}} IW_{M,N}(\phi_{\overleftarrow{M}}; \psi_{\overrightarrow{N}})\tau;
     \end{align*}
     \end{itemize}
then $ \tau^{[M+N]}$ is the new KP tau function, $\phi^{[M+N]}$ is the corresponding KP eigenfunction and $\psi^{[M+N]}$ is the corresponding KP adjoint eigenfunction.
\end{proposition}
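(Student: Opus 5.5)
The plan is induction on the total number $M+N$ of elementary Darboux steps, using the two elementary transformations $T_d(\tilde\phi)=\tilde\phi\pa\tilde\phi^{-1}$ and $T_i(\tilde\psi)=\tilde\psi^{-1}\pa^{-1}\tilde\psi$ of the previous proposition. That proposition already gives how $\tau$, eigenfunctions and adjoint eigenfunctions change under one step. Explicitly, $T_d(\phi_1)$ sends
\begin{align*}
\phi\mapsto \frac{W(\phi_1,\phi)}{\phi_1},\qquad \psi\mapsto -\frac{\Omega(\phi_1,\psi)}{\phi_1},
\end{align*}
since $T_d(\phi_1)^{-1*}=-\phi_1^{-1}\pa^{-1}\phi_1$. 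Similarly $T_i(\psi_1)$ sends
\begin{align*}
\phi\mapsto \frac{\Omega(\phi,\psi_1)}{\psi_1},\qquad \psi\mapsto -\frac{W(\psi_1,\psi)}{\psi_1},
\end{align*}
and $\tau$ is multiplied by the generating (adjoint) eigenfunction. So it suffices to show that the stated ratios $\phi^{[M+N]},\psi^{[M+N]}$ reproduce themselves under one more step, with the stated $\tau^{[M+N]}$, up to the indicated signs.

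\textbf{Order of steps.} First apply $M$ steps of type $T_d$ with $\phi_1,\dots,\phi_M$. This is the classical Wronskian case $N=0$, where $\tau^{[M]}=W(\phi_1,\dots,\phi_M)\tau$ and the transformed eigenfunction is a ratio of Wronskians; the proof is the standard Jacobi identity argument. Then apply the $N$ adjoint steps one at a time, using the transformed adjoint eigenfunctions $\psi_j^{[\cdot]}$ at each stage. The new tau is $\tau^{[M+N+1]}=\psi^{[M+N]}\tau^{[M+N]}$, and with $\psi^{[M+N]}=\pm IW_{N+1,M}/IW_{N,M}$ it equals $\pm IW_{N+1,M}\tau$, as required. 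The new eigenfunction is $\Omega(\phi^{[M+N]},\psi^{[M+N]})/\psi^{[M+N]}$, and the new adjoint eigenfunction is $-W(\psi^{[M+N]},\psi'^{[M+N]})/\psi^{[M+N]}$.

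\textbf{Two determinant identities.} The induction therefore reduces to the following identities for generalized Wronskians.

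1. A Jacobi (Sylvester) type identity. Choose the relevant complementary minors, namely the first row and first column versus the last row and last column. Then
$W\big(IW_{N+1,M}(\psi,\dots),\,IW_{N+1,M}(\psi',\dots)\big)$ is proportional to $IW_{N,M}\cdot IW_{N+2,M}(\psi,\psi',\dots)$. The proof uses $\pa\,\Omega(\phi_i,\psi_j)=\phi_i\psi_j$: differentiating an $\Omega$-row produces a row proportional to the $\phi$-row, so derivatives of $IW$ stay within the same determinant family.

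2. The squared-eigenfunction potential of the transformed pair is again a ratio:
$\Omega\big(\phi^{[M+N]},\psi^{[M+N]}\big)=\pm\, IW_{N+1,M+1}(\psi,\psi_{\overrightarrow N};\phi_{\overleftarrow M},\phi)/IW_{N,M}$.

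I would prove identity 2 by differentiating the right-hand side, applying identity 1 to obtain $\phi^{[M+N]}\psi^{[M+N]}$, and checking the $t_n$-flows via the same Jacobi identity. The integration constants must be fixed consistently with the $\Omega$'s appearing as determinant entries; this is why the entries are taken as $\Omega(\phi_i,\psi_j)$ for the original functions.

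\textbf{Three cases and the main obstacle.} The regimes $M>N$, $M=N$, $M<N$ differ only in whether Wronskian rows survive in $IW_{N,M}$. For $M<N$ the roles of eigenfunctions and adjoint eigenfunctions swap, so I would obtain this case by the duality $\mathcal{L}\mapsto-\mathcal{L}^*$, $\phi\leftrightarrow\psi$, applied to the $M>N$ result rather than redo it. The boundary case $M=N$ is the Grammian and needs a separate check when $N$ reaches $M$ and the last Wronskian row disappears. The main obstacle is identity 2 together with the bookkeeping of signs $(-1)^{NM}$, $(-1)^{M(M-1)/2+N(N-1)/2}$ and of integration constants; I expect these to be handled by carefully tracking row permutations in the Jacobi identity.
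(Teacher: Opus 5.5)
The paper does not prove this proposition; it quotes it from He--Li--Cheng and Yang--Cheng, so your argument has to stand on its own. Its skeleton is sound. The one-step formulas you list are correct. Identity~1 and the $x$-derivative of identity~2 are genuine Plücker/Jacobi identities away from the Grammian boundary, which you rightly single out. The duality does reduce $M<N$ to $M>N$, provided you also reflect $t_n\mapsto(-1)^{n-1}t_n$; this leaves $x$ and all $\Omega(\phi_i,\psi_j)$ unchanged.

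The gap is the $t$-dependence in identity~2. Every $T_i$ step produces $\Omega(\phi^{[M+N]},\psi^{[M+N]})/\psi^{[M+N]}$. Already in your ``classical'' stage $N=0$, every $T_d$ step produces $-\Omega(\phi^{[j]}_{j+1},\psi^{[j]})/\phi^{[j]}_{j+1}$; this is where $\psi^{[M]}=(-1)^M IW_{1,M}(\psi;\phi_{\overleftarrow{M}})/IW_{0,M}$ comes from. These are (adjoint) eigenfunctions only if $\Omega$ is the true squared-eigenfunction potential of the \emph{transformed} pair. That means it must also satisfy $\Omega_{t_n}=\mathrm{Res}\bigl(\pa^{-1}\psi'(\mathcal{L}'^{\,n})_{\geq0}\phi'\pa^{-1}\bigr)$.

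Matching $x$-derivatives fixes your determinant ratio only up to an additive $c(t_2,t_3,\ldots)$. Since $1/\psi^{[M+N]}$ is an eigenfunction of the new operator, $c/\psi^{[M+N]}$ is one only when $c$ is constant. ``Checking the $t_n$-flows via the same Jacobi identity'' does not settle this. Jacobi/Plücker identities work for $\pa_x$ because $\pa_x$ acts on $IW$ by row shifts. Over a general seed $\tau_{\mathrm{KP}}$, however, $\pa_{t_n}\Omega(\phi_i,\psi_j)=\mathrm{Res}(\pa^{-1}\psi_j(\mathcal{L}^n)_{\geq0}\phi_i\pa^{-1})$ is a bilinear differential expression with coefficients taken from $\mathcal{L}$. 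The target involves $(\mathcal{L}^{[M+N]\,n})_{\geq0}$, whose coefficients are themselves determinant ratios. Nothing in your proposal controls this for all $n$.

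The missing ingredient is the one-step transformation law of squared-eigenfunction potentials, valid in all times:
\[
\Omega\bigl(T_d(\phi_1)(\phi),T_d(\phi_1)^{-1*}(\psi)\bigr)=\Omega(\phi,\psi)-\frac{\phi\,\Omega(\phi_1,\psi)}{\phi_1},\qquad
\Omega\bigl(T_i(\psi_1)(\phi),T_i(\psi_1)^{-1*}(\psi)\bigr)=\Omega(\phi,\psi)-\frac{\Omega(\phi,\psi_1)\,\psi}{\psi_1}.
\]
This is a standard lemma. It is proved once by residue calculus from the Darboux condition $(\mathcal{L}^{[1]n})_{\geq0}=T(\mathcal{L}^n)_{\geq0}T^{-1}+T_{t_n}T^{-1}$, and the additive constant can be normalized to zero. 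With it, the $\Omega$-block of the determinant undergoes a Schur-complement update at each step, and every entry remains a genuine potential at every stage. Identity~2 then follows from Sylvester-type determinant algebra together with $x$-only row operations on the Wronskian rows. The correct $t$-dependence is inherited step by step rather than verified after the fact. Your $x$-Jacobi identities remain needed only for the Wronskian rows and for the adjoint step $-W(\psi_1,\psi)/\psi_1$.
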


\subsection{Proofs of Theorem 1 and Corollary 1}
With the preparation above, we now begin to \textbf{prove Theorem 1}. We only prove the case of $(\tau_0^{[1]},\tau_1^{[1]})=(q \tau_1,-q_x\frac{\tau_1^2}{\tau_0})$, since the other one can be proved similarly.
Firstly for the mKP Lax operator $L=\pa+\sum_{i=0}^{\infty}v_{i}\pa^{-i}$, we can get  $e^{\int v_0 dx}=\frac{\tau_1}{\tau_0}\triangleq \tilde{\phi}$ by \eqref{adwavefunction}. By Proposition 3, if set $\tilde{q}=\tilde{\phi}q$, then $\tilde{\phi}$ and $\tilde{q}$ are the KP eigenfunctions, corresponding to the KP Lax operator $L=\tilde{\phi}\mathcal{L}\tilde{\phi}^{-1}$ and the KP tau function $\tau_0$ $(\tau_0 \triangleq \tau_{\text {\tiny \rm{KP}}})$. Further we can find
\begin{align*}
\tau_0^{[1]}=\tilde{q}\tau_{\text {\tiny \rm{KP}}},\quad \tau_1^{[1]}
= \left( \tilde{\phi}_x \tilde{q} - \tilde{\phi} \tilde{q}_x \right) \tau_{\text {\tiny \rm{KP}}} = \tilde{\phi}^{[1]} \tilde{q} \tau_{\text {\tiny \rm{KP}}},
\end{align*}
where $\tilde{\phi}^{[1]} = T_d(\tilde{q})(\tilde{\phi})$. By Proposition 4, we know $\tilde{\phi}^{[1]}$ is another KP eigenfunction, corresponding to the KP Lax operator $L^{[1]}=T_d(\tilde{\phi})LT_d^{-1}(\tilde{\phi})$.\ Therefore by Proposition 1,
$$\left( \tau_{\text {\tiny \rm{KP}}}^{[1]}, \tilde{\phi}^{[1]} \tau_{\text {\tiny \rm{KP}}}^{[1]} \right)=\left(\tilde{q}\tau_{\text {\tiny \rm{KP}}},\tilde{\phi}^{[1]} \tilde{q} \tau_{\text {\tiny \rm{KP}}}\right)$$
is the mKP tau pair. Thus, we complete $\textbf{the proof of Theorem 1}$.

Next, let us start by \textbf{proving Corollary 1}. We only prove the case of $M\geq N$, since the other one can be proved similarly. Firstly, note that $\tau_0$ is the KP tau function by Proposition 1, and let us denote it by $\tau_{\text {\tiny \rm{KP}}}$. If set $\tau_1=\tilde{\phi}\tau_{\text {\tiny \rm{KP}}}$, $\tilde{f_i}=\tilde{\phi}f_i$ and $\tilde{g_i}=\tilde{\phi}^{-1}g_{i,x}$, then by Proposition 4, we can find that $\tilde{f_i}$ and $\tilde{g_i}$ are the KP eigenfunctions and the KP adjoint eigenfunctions respectively, corresponding to $\tau_{\text {\tiny \rm{KP}}}$, and $\tilde{\phi}$ is also the KP eigenfunction corresponding to $\tau_{\text {\tiny \rm{KP}}}$. Further we have:
\begin{align}
 \tau_0^{[M+N]} =& \frac{IW_{N,M}(g_{\overrightarrow{N},x}; f_{\overleftarrow{M}}) \tau_1^{M-N}}{\tau_0^{M-N-1}}
 =IW_{N,M}(\tilde{\phi}\widetilde{g}_N,  \ldots, \tilde{\phi}\widetilde{g}_1; \tilde{\phi}^{-1}\widetilde{f}_1, \ldots, \tilde{\phi}^{-1}\widetilde{f}_M)\tilde{\phi}^{M-N}\tau_{\text {\tiny \rm{KP}}} \nonumber \\
=&\begin{vmatrix}
\Omega(\widetilde{f}_1, \widetilde{g}_N) & \Omega(\widetilde{f}_2, \widetilde{g}_N) & \cdots & \Omega(\widetilde{f}_M, \widetilde{g}_N) \\
\vdots & \vdots & \ddots & \vdots \\
\Omega(\widetilde{f}_1, \widetilde{g}_1) & \Omega(\widetilde{f}_2, \widetilde{g}_1) & \cdots & \Omega(\widetilde{f}_M, \widetilde{g}_1) \\
\tilde{\phi}^{-1}\widetilde{f}_1 & \tilde{\phi}^{-1}\widetilde{f}_2 & \cdots & \tilde{\phi}^{-1}\widetilde{f}_M \\
\left(\tilde{\phi}^{-1}\widetilde{f}_1\right)_x & \left(\tilde{\phi}^{-1}\widetilde{f}_2\right)_x & \cdots & \left(\tilde{\phi}^{-1}\widetilde{f}_M\right)_x \\
\vdots & \vdots & \ddots & \vdots \\
\left(\tilde{\phi}^{-1}\widetilde{f}_1\right)^{(M-N-1)} & \left(\tilde{\phi}^{-1}\widetilde{f}_2\right)^{(M-N-1)} & \cdots & \left(\tilde{\phi}^{-1}\widetilde{f}_M\right)^{(M-N-1)}
\end{vmatrix}\cdot \tilde{\phi}^{M-N}\tau_{\text {\tiny \rm{KP}}}.\label{coropf1}
\end{align}
Note that $\left(\tilde{\phi}^{-1}\widetilde{f}_i\right)^{(j)}=\sum_{k=0}^{j}C_j^k(\tilde{\phi}^{-1})^{(j-k)}\tilde{f}_i^{(k)}$ with $C_j^k=\frac{j(j-1)\cdots(j-k+1)}{k!}$. So if we apply elementary row operations to the determinant in \eqref{coropf1}, then we can find
\begin{align*}
\tau_0^{[M+N]}=\begin{vmatrix}
\Omega(\widetilde{f}_1, \widetilde{g}_N) & \Omega(\widetilde{f}_2, \widetilde{g}_N) & \cdots & \Omega(\widetilde{f}_M, \widetilde{g}_N) \\
\vdots & \vdots & \ddots & \vdots \\
\Omega(\widetilde{f}_1, \widetilde{g}_1) & \Omega(\widetilde{f}_2, \widetilde{g}_1) & \cdots & \Omega(\widetilde{f}_M, \widetilde{g}_1) \\
\tilde{\phi}^{-1}\widetilde{f}_1 & \tilde{\phi}^{-1}\widetilde{f}_2 & \cdots & \tilde{\phi}^{-1}\widetilde{f}_M \\
\tilde{\phi}^{-1}\widetilde{f}_{1,x} & \tilde{\phi}^{-1}\widetilde{f}_{2,x} & \cdots & \tilde{\phi}^{-1}\widetilde{f}_{M,x} \\
\vdots & \vdots & \ddots & \vdots \\
\tilde{\phi}^{-1}\widetilde{f}_1^{(M-N-1)} & \tilde{\phi}^{-1}\widetilde{f}_2^{(M-N-1)} & \cdots & \tilde{\phi}^{-1}\widetilde{f}_M^{(M-N-1)}
\end{vmatrix} \cdot \tilde{\phi}^{M-N} \tau_{\text {\tiny \rm{KP}}}=IW_{N,M}(\widetilde{g}_{\overrightarrow{N}}; \widetilde{f}_{\overleftarrow{M}})\tau_{\text {\tiny \rm{KP}}}.
\end{align*}
A similar discussion can lead to
\begin{align*}
 \tau_1^{[M+N]}=&IW_{N,M+1}(\widetilde{g}_{\overrightarrow{N}}; \widetilde{f}_{\overleftarrow{M}},\tilde{\phi})\tau_{\text {\tiny \rm{KP}}}.
\end{align*}
Furthermore by Proposition 5, we can know
 \begin{align*}
 \tilde{\phi}^{[M+N]} =& \frac{IW_{N,M+1}(\widetilde{g}_{\overrightarrow{N}}; \widetilde{f}_{\overleftarrow{M}},\tilde{\phi})}{IW_{N,M}(\widetilde{g}_{\overrightarrow{N}}; \widetilde{f}_{\overleftarrow{M}})},
\end{align*}
is the KP eigenfunction corresponding to $\tau_0^{[M+N]}$.
Therefore
\begin{align*}
 \tau_1^{[M+N]}= \tilde{\phi}^{[M+N]}\tau_0^{[M+N]},
\end{align*}
and by Proposition 1, we complete $\textbf{the proof of Corollary 1}$.

\section{The constrained modified KP hierarchy}\label{section3}
In this section, we first review the mKP Darboux transformations and apply them to the constrained mKP hierarchy. On this basis, we derive the generalized Wronskian solutions for the constrained mKP hierarchy, satisfying the corresponding bilinear equations, that is Theorem 2.
\subsection{The mKP Darboux transformations}
Let us review the mKP Darboux transformation \cite{Oevel1998JP,cheng2018,Yang2022}. One pseudo-differential operator $T$ is called the mKP Darboux operator if for the mKP Lax operator $L$,
\begin{align*}
L^{[1]}\triangleq TLT^{-1}
\end{align*}
is still the mKP Lax operator. It can be proved that there are two kinds of basic mKP Darboux operators for the mKP eigenfunction $f$ and the mKP adjoint eigenfunction $g$:
\begin{align}\label{TdTi}
T_D(f) \triangleq (f^{-1})_x^{-1}\pa f^{-1},\quad T_I(g) \triangleq g^{-1}\pa^{-1}g_x,
\end{align}
which can commute with each other, meaning that
\begin{align*}
&T_D(f_2^{[1]})T_D(f_1)=T_D(f_1^{[1]})T_D(f_2),\ \textup{with} \ f_i^{[1]}=T_D(f_{3-i})(f_i),\ i=1 \ \textup{or}\ 2,  \\
&T_I(g_2^{[1]})T_I(g_1)=T_I(g_1^{[1]})T_I(g_2),\ \textup{with} \ g_i^{[1]}=\left(\pa^{-1}T_I(g_{3-i})^{-1*}\pa\right)(g_i), \ i=1 \ \textup{or}\ 2,\\
&T_D(f^{[1]})T_I(g)=T_I(g^{[1]})T_D(f),\ \textup{with} \ f^{[1]}=T_I(g)(f), \ g^{[1]}=\left(\pa^{-1}T_D(f)^{-1*}\pa\right)(g).
\end{align*}

Due to the above commutativity of $T_D$ and $T_I$, we can only consider the following mKP Darboux chain:
\begin{eqnarray*}
&&L^{[0]}\xrightarrow{T_D(f_1^{[0]})}L^{[1]}\xrightarrow{T_D(f_2^{[1]})}L^{[2]}
  \rightarrow\cdots\rightarrow L^{[M-1]}\xrightarrow{T_D(f_M^{[M-1]})}L^{[M]}\\
&&\xrightarrow{T_I(g_1^{[M]})}L^{[M+1]}\xrightarrow{T_I(g_2^{[M+1]})}\cdots\rightarrow L^{[M+N-1]}\xrightarrow{T_I(g_N^{[M+N-1]})}L^{[M+N]},
\end{eqnarray*}
where $f_i~(1\leq i\leq M)$ and  $g_j~(1\leq j\leq N)$  are the mKP eigenfunctions and the mKP adjoint eigenfunctions respectively. Here $f_i^{[i-1]} $, $L^{[i]}~(1 \leq i\leq M)$, $g_j^{[M+j-1]}$ and $L^{[M+j]}~(1 \leq j \leq N)$ are given by
$f_i^{[i-1]}\triangleq T^{[\overrightarrow{i-1},\overrightarrow{0}]}(f_i)$ ,
$L^{[i]}\triangleq T^{[\overrightarrow{i},\overrightarrow{0}]}\cdot L\cdot T^{[\overrightarrow{i},\overrightarrow{0}]-1}$,
$g_j^{[M+j-1]}\triangleq \left(\pa^{-1}(T^{[\overrightarrow{M},\overrightarrow{j-1}]-1})^*\pa\right)(g_j)$ and
$L^{[M+j]}\triangleq T^{[\overrightarrow{M},\overrightarrow{j}]}\cdot L\cdot T^{[\overrightarrow{M},\overrightarrow{j}]-1}$,
where
\begin{align}
T^{[\overrightarrow{M},\overrightarrow{N}]}\triangleq T^{[\overrightarrow{M},\overrightarrow{N}]}(f_{\overleftarrow{M}};g_{\overleftarrow{N}})=T_I(g_N^{[M+N-1]})\ldots
T_I(g_1^{[M]})T_D(f_M^{[M-1]})\ldots T_D(f_1).\label{Tnkexpr}
\end{align}
 Here $T^{[\overrightarrow{M},\overrightarrow{N}]}$ and $\pa^{-1}\left({T}^{[\overrightarrow{M},\overrightarrow{N}]-1} \right)^*\pa$ have determinant representations \cite{cheng2018}, which can be found in the appendix of this paper.
\subsection{The Darboux transformations for the constrained mKP hierarchy}
If we apply the mKP Darboux operators $T_D(f)$ and $T_I(g)$ to the $(k,m)$-constrained mKP hierarchy:
\begin{align}
&(L^k)_{\leq 0}=\sum_{i=1}^{m}q_i\pa^{-1}r_i\pa,\quad L_{t_n}=[(L^n)_{\geq 1},L],\label{Lk<0} \\
&q_{j,t_n}=(L^n)_{\geq1}(q_j),\quad r_{j,t_n}=-\big(\pa^{-1}(L^n)^*_{\geq1}\pa\big)(r_j), \quad 1\leq j\leq m,\label{qjrj}
\end{align}
we can get\cite{Chen20193,Yangyi2020}
\begin{align}\label{Lk[1]}
(L^{[1]})^k_{\leq 0}=q_0^{[1]}\pa^{-1}r_0^{[1]}\pa+\sum_{i=1}^{m}q_i^{[1]}\pa^{-1}r_i^{[1]}\pa.
\end{align}
Here $L^{[1]}=TLT^{-1}$, where $T=T_D(f)$ or $T_I(g)$.

 $\bullet$ ~ $\textbf{Case 1:} ~ T=T_D(f)$,
\begin{align}
  &q_0^{[1]}=-(T_D(f)L^k)(f),\ \ \  r_0^{[1]}=f^{-1},\label{TD(f)q0} \\
  &q_i^{[1]}=T_D(f)(q_i),\ \ \  r_i^{[1]}=(\pa T_D(f)^{-1} \pa^{-1})^*(r_i),\label{TD(f)qi}
\end{align}

 $\bullet$ ~ $\textbf{Case 2:} ~ T=T_I(g)$,
\begin{align}\label{TI(g)}
  &q_0^{[1]}=g^{-1},\ \ \ r_0^{[1]}=-(\pa L^kT_I(g)^{-1}\pa^{-1})^*(g), \\
  &q_i^{[1]}=T_I(g)(q_i),\ \ \ r_i^{[1]}=(\pa T_I(g)^{-1} \pa^{-1})^*(r_i).
\end{align}
To compute $L^{[M+N]}=T^{[\overrightarrow{M},\overrightarrow{N}]}\cdot L\cdot T^{[\overrightarrow{M},\overrightarrow{N}]-1}$ for the Lax operator $L$ of the $(k,m)$-constrained mKP hierarchy, we need the following lemma.
\begin{lemma}
The following relations hold:
\begin{align}
&\left( \pa^{-1} (T_D(f_2^{[1]})^{-1})^* \pa \right)(f_1^{-1}) =\Bigl( T_D(f_2)(f_1) \Bigr)^{-1},\label{fourequations1} \\
&\left( \pa^{-1} (T_I(g_2^{[1]})^{-1})^* \pa \right)(f_1^{-1}) =\Bigl( T_I(g_2)(f_1) \Bigr)^{-1},\label{fourequations2} \\
&T_D(f_2^{[1]})(g_1^{-1})= \Bigl(\left(\pa^{-1} (T_D(f_2)^{-1})^* \pa \right)(g_1)\Bigr)^{-1},\label{fourequations3} \\
&T_I(g_2^{[1]})(g_1^{-1})= \Bigl(\left(\pa^{-1} (T_I(g_2)^{-1})^* \pa \right)(g_1)\Bigr)^{-1}.\label{fourequations4}
\end{align}
Here $f_2^{[1]}=T_D(f_1)(f_2)$ in \eqref{fourequations1}, $g_2^{[1]}=\Bigl(\pa^{-1} (T_D(f_1)^{-1})^* \pa \Bigr)(g_2)$ in \eqref{fourequations2}, $f_2^{[1]}=T_I(g_1)(f_2)$ in \eqref{fourequations3}, and $g_2^{[1]}=\Bigl(\pa^{-1} (T_I(g_1)^{-1})^* \pa \Bigr)(g_2)$ in \eqref{fourequations4}.
\end{lemma}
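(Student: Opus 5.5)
The plan is to verify all four identities by direct computation with the explicit operators \eqref{TdTi}, after first recording a few elementary operator facts. Using $(\pa^{-1})^*=-\pa^{-1}$, we have
\begin{align*}
T_D(f)^{-1}=f\pa^{-1}(f^{-1})_x,\qquad T_I(g)^{-1}=g_x^{-1}\pa g .
\end{align*}
This gives
\begin{align*}
\pa^{-1}(T_D(f)^{-1})^*\pa=-\pa^{-1}(f^{-1})_x\pa^{-1}f\pa,\qquad \pa^{-1}(T_I(g)^{-1})^*\pa=-\pa^{-1}g\pa g_x^{-1}\pa .
\end{align*}
Acting on a function $h$, the first operator returns $-\pa^{-1}\big[(f^{-1})_x\,\pa^{-1}(fh_x)\big]$. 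The second returns $-\pa^{-1}\big[g\,(h_x/g_x)_x\big]$, which after one integration by parts is $-g h_x/g_x+h$.

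I also plan to exploit a duality that roughly halves the work. Observe that $T_I(f^{-1})=f\pa^{-1}(f^{-1})_x=T_D(f)^{-1}$, so $T_D$ and $T_I$ are exchanged by $f\leftrightarrow g^{-1}$ combined with inversion. Together with the map $A\mapsto \pa^{-1}(A^{-1})^*\pa$, which sends the action on eigenfunctions to the action on adjoint eigenfunctions, this should let \eqref{fourequations3} be read off from \eqref{fourequations2}, and \eqref{fourequations4} from \eqref{fourequations1}. Concretely, put $g_1=f_1^{-1}$ and rewrite each side using the formulas above. If this duality turns out to be cleaner to verify than the identities themselves, I will use it; otherwise I will simply repeat the direct computation four times.

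For \eqref{fourequations1} I first compute the dressed function:
\begin{align*}
f_2^{[1]}=T_D(f_1)(f_2)=\frac{(f_2/f_1)_x}{(f_1^{-1})_x}.
\end{align*}
The key observation is that $f_2^{[1]}(f_1^{-1})_x=(f_2/f_1)_x$. Hence the inner $\pa^{-1}$ in $\pa^{-1}(T_D(f_2^{[1]})^{-1})^*\pa$ applied to $f_1^{-1}$ collapses to an exact antiderivative, namely $\pa^{-1}\big(f_2^{[1]}(f_1^{-1})_x\big)=f_2/f_1$. The remaining outer integral $-\pa^{-1}\big[(1/f_2^{[1]})_x\,f_2/f_1\big]$ is then handled by one integration by parts. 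This produces $-\frac{f_2}{f_1 f_2^{[1]}}+\pa^{-1}\big[(f_2/f_1)_x/f_2^{[1]}\big]$. The last integrand equals $(f_1^{-1})_x$, which is again exact.

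What remains is purely algebraic, namely to compare the result with
\begin{align*}
\big(T_D(f_2)(f_1)\big)^{-1}=\frac{(f_2^{-1})_x}{(f_1/f_2)_x}.
\end{align*}
Identity \eqref{fourequations2} is handled the same way. Here $g_2^{[1]}=\big(\pa^{-1}(T_D(f_1)^{-1})^*\pa\big)(g_2)$, and I will use the closed form $-g h_x/g_x+h$ for the $T_I$ piece. The two key exact-derivative cancellations again come from $\pa^{-1}(f_1 g_{2,x})$ appearing inside $g_2^{[1]}$.

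The main obstacle I expect is bookkeeping rather than ideas. It lies in the integration constants hidden in every $\pa^{-1}$ and in the exact normalization of the functions. A first quick pass at the $T_D$ case seemed to leave a stray overall factor of $f_1$ between the two sides, so this mismatch is the first thing I would resolve. To do so, I would recheck the normalization $T_D(f)=(f^{-1})_x^{-1}\pa f^{-1}$ and its adjoint sign on a test case $f_1=e^{\lambda x}$, $f_2=e^{\mu x}$, where both sides are explicit exponentials and all constants are forced. The principle for fixing the constants is to choose every antiderivative as the exact one: $\pa^{-1}(u_x)=u$. This is consistent with how the squared eigenfunction potentials $\Omega$ are normalized in \eqref{omeq}–\eqref{omer}, and with the determinant formulas in the appendix, so the identities are to be understood modulo this convention.
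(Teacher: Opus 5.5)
Your plan is correct and is essentially the paper's proof, a direct computation from the explicit forms of $T_D$ and $T_I$. The paper differentiates \eqref{fourequations1} once instead of integrating by parts, which sidesteps the integration constants: it checks that both sides of $\bigl(T_D(f_2^{[1]})^{-1}\bigr)^*\bigl((f_1^{-1})_x\bigr)=\bigl((T_D(f_2)(f_1))^{-1}\bigr)_x$ equal $-\frac{f_2}{f_1}\bigl(1/f_2^{[1]}\bigr)_x$. In your version the feared stray factor of $f_1$ does not occur, since $-\frac{f_2}{f_1f_2^{[1]}}+\frac{1}{f_1}=\frac{f_{2,x}}{f_1f_{2,x}-f_{1,x}f_2}=\frac{(f_2^{-1})_x}{(f_1/f_2)_x}$ exactly.

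For the duality, the clean tool is not the substitution $g_1=f_1^{-1}$. It is the pair of operator identities $\pa^{-1}T_D(f)^{-1*}\pa=T_I(f)$ and $\pa^{-1}T_I(g)^{-1*}\pa=T_D(g)$ from the appendix. Under these, \eqref{fourequations4} is literally \eqref{fourequations1}, and \eqref{fourequations3} is literally \eqref{fourequations2}, after relabeling the functions.
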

\begin{proof}
We only prove \eqref{fourequations1}, since the others can be proved similarly. To prove \eqref{fourequations1}, we only need to check
\begin{align}\label{fourequations11}
\left(T_D(f_2^{[1]})^{-1}\right)^*\Bigl((f_1^{-1})_x\Bigr)=\left(\bigl( T_D(f_2)(f_1) \bigr)^{-1}\right)_x,
\end{align}
where $f_2^{[1]}=(f_1^{-1}f_2)_x/(f_1^{-1})_x$. By the direct computation, we can find that both sides of \eqref{fourequations11} are equal to $-f_1^{-1}f_2(f_2^{[1]-1})_x$.
\end{proof}

\begin{proposition}
  For the Lax operator $L$ of the $(k,m)$-constrained mKP hierarchy \eqref{Lk<0}\eqref{qjrj}:
  \begin{equation}\label{LMNk}
  \begin{aligned}
    (L^{[M+N]})^k_{<1}
    =&-\sum_{i=1}^{N}\left(\bigl(\pa^{-1}(T^{[\overrightarrow{M},\overrightarrow{N}\setminus \{i\}]-1})^* \pa\bigr)(g_i)\right)^{-1} \cdot \pa^{-1} \cdot \left(\pa^{-1}\bigl(T^{[\overrightarrow{M},\overrightarrow{N}]-1}\bigr)^*(L^k)^*\pa\right)(g_i)\cdot\pa\\
    &-\sum_{i=1}^{M}\left(T^{[\overrightarrow{M},\overrightarrow{N}]}L^k \right)(f_i)\cdot \pa^{-1}\cdot
    \left((T^{[\overrightarrow{M}\setminus \{i\},\overrightarrow{N}]}) (f_i)\right)^{-1}\cdot \pa \\
    &+\sum_{i=1}^{m}T^{[\overrightarrow{M},\overrightarrow{N}]}(q_i)\cdot\pa^{-1}\cdot
     \left(\pa^{-1}\bigl(T^{[\overrightarrow{M},
     \overrightarrow{N}]-1}\bigr)^*\pa\right)(r_i)\cdot\pa.
 \end{aligned}
  \end{equation}
\end{proposition}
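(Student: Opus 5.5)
Our plan is an induction along the Darboux chain \eqref{Tnkexpr}. At each step we apply the single-step formulas \eqref{Lk[1]}--\eqref{TI(g)}, and we use the lemma above (equations \eqref{fourequations1}--\eqref{fourequations4}) together with the commutativity of $T_D$ and $T_I$ to rewrite the accumulated terms in the closed form \eqref{LMNk}.

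First we treat the pure $T_D$ part, i.e.\ $N=0$, by induction on $M$. Suppose $(L^{[M-1]})^k_{<1}$ has the form \eqref{LMNk} with $M-1$ eigenfunctions. Applying $T=T_D(f_M^{[M-1]})$ with \eqref{TD(f)q0}--\eqref{TD(f)qi} produces three kinds of terms.

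\begin{itemize}
\item The $q_i$ terms are pushed to $T_D(f_M^{[M-1]})T^{[\overrightarrow{M-1},\overrightarrow{0}]}(q_i)=T^{[\overrightarrow{M},\overrightarrow{0}]}(q_i)$. Their partners become $\bigl(\pa^{-1}(T^{[\overrightarrow{M},\overrightarrow{0}]-1})^*\pa\bigr)(r_i)$, by composing adjoints.
\item The new term is $-(T_D(f_M^{[M-1]})(L^{[M-1]})^k)(f_M^{[M-1]})\pa^{-1}(f_M^{[M-1]})^{-1}\pa$. Since $(L^{[M-1]})^kT^{[\overrightarrow{M-1},\overrightarrow{0}]}=T^{[\overrightarrow{M-1},\overrightarrow{0}]}L^k$, its first factor equals $(T^{[\overrightarrow{M},\overrightarrow{0}]}L^k)(f_M)$. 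Its second factor is $(T^{[\overrightarrow{M-1},\overrightarrow{0}]}(f_M))^{-1}$, which is exactly the $i=M$ summand.
\item The old $f_i$ terms have right factors $(T^{[\overrightarrow{M-1}\setminus\{i\},\overrightarrow{0}]}(f_i))^{-1}$, which are transformed by $\pa^{-1}(T_D(f_M^{[M-1]})^{-1})^*\pa$. Here \eqref{fourequations1} gives the answer $(T^{[\overrightarrow{M}\setminus\{i\},\overrightarrow{0}]}(f_i))^{-1}$. To see this, note that $T^{[\overrightarrow{M-1}\setminus\{i\},\overrightarrow{0}]}(f_M)$ and $T^{[\overrightarrow{M-1}\setminus\{i\},\overrightarrow{0}]}(f_i)$ play the roles of $f_2,f_1$. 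By commutativity, $f_M^{[M-1]}$ is exactly their $f_2^{[1]}$. The left factor $(T^{[\overrightarrow{M-1},\overrightarrow{0}]}L^k)(f_i)$ just gets $T_D(f_M^{[M-1]})$ applied.
\end{itemize}

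Next we run the $T_I$ steps on top of $L^{[M]}$, by induction on $N$, in the same way using \eqref{TI(g)}. The new summand has left factor $(g_N^{[M+N-1]})^{-1}$. This can be rewritten as $\bigl((\pa^{-1}(T^{[\overrightarrow{M},\overrightarrow{N-1}]-1})^*\pa)(g_N)\bigr)^{-1}$, which by definition is $(\cdots T^{[\overrightarrow{M},\overrightarrow{N}\setminus\{N\}]}\cdots)$. Its right factor is $-(\pa (L^{[M+N-1]})^kT_I^{-1}\pa^{-1})^*(g_N^{[M+N-1]})$. Using $(L^{[M+N-1]})^k=T^{[\overrightarrow{M},\overrightarrow{N-1}]}L^kT^{[\overrightarrow{M},\overrightarrow{N-1}]-1}$ and taking adjoints, this collapses to $-\bigl(\pa^{-1}(T^{[\overrightarrow{M},\overrightarrow{N}]-1})^*(L^k)^*\pa\bigr)(g_N)$. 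The old $g_i$ left factors are transformed by $T_I(g_N^{[M+N-1]})$, and \eqref{fourequations4} (respectively \eqref{fourequations3} for the $f_i$ terms under $T_I$) converts them into the claimed reciprocal forms with the $i$-th operator removed. The old $f_i$ right factors are handled by \eqref{fourequations2}.

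The main obstacle will be the bookkeeping in applying the lemma. Each instance of \eqref{fourequations1}--\eqref{fourequations4} must be invoked with $f_1,f_2$ (or $g_1,g_2$) replaced by images under the Darboux operator $T^{[\overrightarrow{M}\setminus\{i\},\overrightarrow{N}\setminus\{j\}]}$. We must justify, via the commutativity relations for $T_D,T_I$ stated before \eqref{Tnkexpr}, that the chain can be reordered so that $f_i$ (or $g_i$) is applied last. Concretely, we would first prove the auxiliary identity $T^{[\overrightarrow{M},\overrightarrow{N}]}=T_D\bigl(T^{[\overrightarrow{M}\setminus\{i\},\overrightarrow{N}]}(f_i)\bigr)\,T^{[\overrightarrow{M}\setminus\{i\},\overrightarrow{N}]}$ (and its analogue for $g_i$) by repeated adjacent swaps. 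With this identity, each inductive step reduces to a single application of the lemma.
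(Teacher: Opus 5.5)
Your proposal is correct and follows essentially the paper's route: induction along the Darboux chain using the single-step formulas \eqref{Lk[1]}--\eqref{TI(g)}, the lemma, and a reordering of the chain via commutativity. The paper uses this reordering only implicitly, for instance in $T^{[\overrightarrow{M+1},\overrightarrow{N}]}=T_D(f_{M+1}^{[M+N]})T^{[\overrightarrow{M},\overrightarrow{N}]}$ and in each application of the lemma, and you rightly state it as an auxiliary identity. The remaining differences are cosmetic. The paper inducts on $M+N$ and appends either a $T_D$ or a $T_I$ at each step, so its displayed $T_D$ case needs \eqref{fourequations3}. You run all $T_D$ steps before the $T_I$ steps, exactly as in \eqref{Tnkexpr}, so you only need \eqref{fourequations1}, \eqref{fourequations2} and \eqref{fourequations4}. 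Your parenthetical invoking \eqref{fourequations3} for the $f_i$ terms under $T_I$ is therefore a mislabel: those left factors need no lemma, and their reciprocal right factors are handled by \eqref{fourequations2}, as your next sentence says.
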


\begin{proof}
  Obviously when $M=N=0$, the conclusion holds. Suppose that the conclusion holds for $M+N$.
  We next prove that \eqref{LMNk} also holds for $M+N+1$. Now there are two cases for obtaining $L^{[M+N+1]}$ from $L^{[M+N]}$. The first one is given by $T^{[\overrightarrow{M+1},\overrightarrow{N}]}$, another one is the usage of $T^{[\overrightarrow{M},\overrightarrow{N+1}]}$. Here we only prove the case of $T^{[\overrightarrow{M+1},\overrightarrow{N}]}$, since the case of $T^{[\overrightarrow{M},\overrightarrow{N+1}]}$ can be proved similarly.

  Notice that in the case of $T^{[\overrightarrow{M+1},\overrightarrow{N}]}$, we have $L^{[M+N+1]}=T_D(f_{M+1}^{[M+N]})L^{[M+N]}T_D(f_{M+1}^{[M+N]})^{-1}$. Thus by \eqref{Lk[1]}-\eqref{TD(f)qi},
  \begin{small}
  \begin{align}
  &(L^{[M+N+1]})^k_{<1}=-\left(T_D(f_{M+1}^{[M+N]})L^{[M+N]k}\right)(f_{M+1}^{[M+N]})\cdot \pa^{-1} \cdot (f_{M+1}^{[M+N]-1})\pa \nonumber \\
  &-\sum_{i=1}^{N}T_D(f_{M+1}^{[M+N]})\left(\bigl(\pa^{-1}(T^{[\overrightarrow{M},\overrightarrow{N}\setminus \{i\}]-1})^* \pa\bigr)(g_i^{-1})\right) \cdot \pa^{-1}\nonumber\\
  &\cdot \left(\pa^{-1}\bigl(T_D(f_{M+1}^{[M+N]})^{-1}\bigr)^*\pa\right)
  \left(\pa^{-1}(T^{[\overrightarrow{M},\overrightarrow{N}]-1})^*(L^k)^*\pa\right)(g_i)\cdot\pa \nonumber\\
  &-\sum_{i=1}^{M}T_D(f_{M+1}^{[M+N]})\left(T^{[\overrightarrow{M},\overrightarrow{N}]}L^k \right)(f_i)\cdot \pa^{-1}
  \cdot \left(\pa^{-1}\bigl(T_D(f_{M+1}^{[M+N]})^{-1}\bigr)^*\pa\right)
    \left((T^{[\overrightarrow{M}\setminus \{i\},\overrightarrow{N}]}) (f_i^{-1})\right)\cdot \pa \nonumber\\
  &+\sum_{i=1}^{m}T_D(f_{M+1}^{[M+N]})T^{[\overrightarrow{M},\overrightarrow{N}]}(q_i)\cdot\pa^{-1}
  \cdot\left(\pa^{-1}\bigl(T_D(f_{M+1}^{[M+N]})^{-1}\bigr)^*\pa\right)
  \left(\pa^{-1}(T^{[\overrightarrow{M},\overrightarrow{N}]-1})^*\pa\right)
  (r_i)\cdot\pa.\label{LMNk11}
  \end{align}
\end{small}

  Note that $T^{[\overrightarrow{M+1},\overrightarrow{N}]}=T_D(f_{M+1}^{[M+N]})T^{[\overrightarrow{M},\overrightarrow{N}]}$
  and $T^{[\overrightarrow{M},\overrightarrow{N}]}=T^{[\overrightarrow{M+1}\setminus \{M+1\},\overrightarrow{N}]}$, then the first item on the RHS of \eqref{LMNk11} becomes
  \begin{align*}
   -\left(T^{[\overrightarrow{M+1},\overrightarrow{N}]}L^k\right)(f_{M+1})\cdot \pa^{-1} \cdot
  \left(T^{[\overrightarrow{M+1}\setminus \{M+1\},\overrightarrow{N}]}(f_{M+1})\right)^{-1}\cdot\pa.
  \end{align*}
   Next by \eqref{fourequations3}, the second item on the RHS of \eqref{LMNk11} becomes
  $$-\sum_{i=1}^{N}\left(\bigl(\pa^{-1}(T^{[\overrightarrow{M+1},\overrightarrow{N}\setminus \{i\}]-1})^* \pa\bigr)(g_i)\right)^{-1}\cdot \pa^{-1} \cdot \left(\pa^{-1}(T^{[\overrightarrow{M+1},\overrightarrow{N}]-1})^*(L^k)^*\pa\right)(g_i)\cdot \pa.$$
  Similarly by $T^{[\overrightarrow{M+1},\overrightarrow{N}]}=T_D(f_{M+1}^{[M+N]})T^{[\overrightarrow{M},\overrightarrow{N}]}$, the third item and the fourth item on the RHS of \eqref{LMNk11} become
    \begin{align*}
    -\sum_{i=1}^{M}\left(T^{[\overrightarrow{M+1},\overrightarrow{N}]}L^k \right)(f_i)\cdot \pa^{-1}\cdot  \left((T^{[\overrightarrow{M+1}\setminus \{i\},\overrightarrow{N}]}) (f_i)\right)^{-1}\cdot \pa,
  \end{align*}
  and
  $$\sum_{i=1}^{m}T^{[\overrightarrow{M+1},\overrightarrow{N}]}(q_i)\cdot\pa^{-1}\cdot
     \left(\pa^{-1}(T^{[\overrightarrow{M+1},\overrightarrow{N}]-1})^*\pa\right)(r_i)\cdot\pa.$$
  Based on the above relations, this proposition can be proved.
\end{proof}

\begin{corollary}\label{Corollary2}
 Under the same conditions of Theorem 2, we have
\begin{small}
  \begin{align}
 &(T^{[\overrightarrow{M},\overrightarrow{N}]}\cdot\pa^k\cdot T^{[\overrightarrow{M},\overrightarrow{N}]-1})_{<1}= -\sum_{i=1}^{s}  T^{[\overrightarrow{M},\overrightarrow{N}]} (f_i^{((M_i+1)k)})\cdot \pa^{-1} \cdot \biggl( T^{[\overrightarrow{M}\setminus \{M[i]\},\overrightarrow{N}]}(f_i^{(M_i k)})\biggr)^{-1}\cdot \pa \nonumber \\
&- (-1)^k\sum_{i=1}^{a} \Biggl( \left(  \pa^{-1}  T^{[\overrightarrow{M},\overrightarrow{N}\setminus \{N[i]\}]-1*} \pa  \right) (g_i^{(N_ik)}) \Biggr)^{-1} \cdot \pa^{-1} \cdot \left(\pa^{-1}T^{[\overrightarrow{M},\overrightarrow{N}]-1*}\pa\right)(g_i^{((N_i+1) k)})\cdot\pa.\label{LMNKnew1}
\end{align}
  \end{small}
Here we denote $M[i]=\sum_{j=1}^{i}M_j+i$ and $N[i]=\sum_{j=1}^{i}N_j+i$.
\end{corollary}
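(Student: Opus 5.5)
We apply the preceding Proposition (formula \eqref{LMNk}) with the trivial initial data $L=\pa$, which has $(\pa^k)_{\leq 0}=0$, i.e.\ $m=0$ in \eqref{Lk<0}. Then $L^{[M+N]}=T^{[\overrightarrow{M},\overrightarrow{N}]}\pa T^{[\overrightarrow{M},\overrightarrow{N}]-1}$ and the left side of \eqref{LMNKnew1} is $(L^{[M+N]})^k_{<1}$. The last sum in \eqref{LMNk} is absent, so \eqref{LMNk} expresses it as a sum over all $M$ eigenfunctions $\mathfrak{q}$ and all $N$ adjoint eigenfunctions $\mathfrak{r}$. The task is to show that almost all of these terms cancel, leaving one term per chain $f_i,f_i^{(k)},\ldots,f_i^{(M_ik)}$ and one per chain $g_j,\ldots,g_j^{(N_jk)}$.

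\textbf{Exponential terms.} For $\varphi_l=e^{\xi(t,\lambda_l)}$ we have $\pa^k(\varphi_l)=\lambda_l^k\varphi_l$. Hence $(T^{[\overrightarrow{M},\overrightarrow{N}]}\pa^k)(\varphi_l)=\lambda_l^kT^{[\overrightarrow{M},\overrightarrow{N}]}(\varphi_l)=0$, since the Darboux operator annihilates every function in its own kernel list. This kernel property follows from the determinant representation in the appendix, or inductively from $T_D(f)(f)=0$. Similarly, for $\psi_p$, $(\pa^k)^*\pa(\psi_p)$ is proportional to $\pa\psi_p$, and $\pa^{-1}T^{-1*}\pa$ annihilates each $\mathfrak{r}$ (by the analogous property $\pa^{-1}T_I(g)^{-1*}\pa(g)=0$). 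So all terms indexed by $\varphi_l$ and $\psi_p$ vanish.

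\textbf{Chain terms (main step).} For the $f$-chain of block $i$, the eigenfunction entries are $f_i^{(jk)}$ for $0\le j\le M_i$, and $\pa^k f_i^{(jk)}=f_i^{((j+1)k)}$. For $j<M_i$ this lies again in the kernel list, so the first factor $(T\pa^k)(f_i^{(jk)})$ vanishes. Only $j=M_i$ survives, contributing
$$-T^{[\overrightarrow{M},\overrightarrow{N}]}(f_i^{((M_i+1)k)})\,\pa^{-1}\bigl(T^{[\overrightarrow{M}\setminus\{M[i]\},\overrightarrow{N}]}(f_i^{(M_ik)})\bigr)^{-1}\pa .$$
The position $M[i]$ is exactly where $f_i^{(M_ik)}$ sits in $\mathfrak{q}_{\overleftarrow{M}}$. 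The adjoint chains work the same way. Using $(\pa^k)^*=(-1)^k\pa^k$ together with $g_{j,t_n}=(-1)^{n-1}g_j^{(n)}$ gives $\pa^{-1}(\pa^k)^*\pa(g_j^{(lk)})=(-1)^kg_j^{((l+1)k)}$. This term is killed by $\pa^{-1}T^{-1*}\pa$ unless $l=N_j$, and the surviving term carries the sign $-(-1)^k$.

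\textbf{Expected difficulty.} The delicate points are the following.

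(a) The functions $f_i^{(jk)}$ and $g_j^{(lk)}$ must be genuine mKP (adjoint) eigenfunctions for $L=\pa$, i.e.\ satisfy $q_{t_n}=\pa^n q$ and $r_{t_n}=-(\pa^{-1}(\pa^n)^*\pa)(r)=(-1)^{n-1}\pa^n r$. This checks directly from the hypotheses of Theorem \ref{th2}, since $\pa$ commutes with $\pa_{t_n}$.

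(b) One must justify that $T^{[\overrightarrow{M},\overrightarrow{N}]}$ annihilates every $\mathfrak{q}$, and $\pa^{-1}T^{-1*}\pa$ every $\mathfrak{r}$, regardless of their order in the chain. I would obtain this from the commutativity relations of $T_D$ and $T_I$ stated above, which allow any chosen function to be moved to the first step of the chain.

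(c) One must check that the order-$\le 0$ projection in \eqref{LMNk} indeed produces precisely the terms above and that no constant of integration spoils the vanishing. This uses the integration-constant conventions \eqref{omeq} and \eqref{omer} for exponentials.

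I expect (b), together with the careful bookkeeping of the indices $M[i]$ and $N[i]$, to be the main obstacle.
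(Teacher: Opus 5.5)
Your proposal is correct and is essentially the paper's own proof. You specialize Proposition 6 to $L=\pa$, so that $(L^k)_{\leq 0}=0$ and the $q_i,r_i$ sum drops out. Then you kill every term except the last one in each chain, using $T^{[\overrightarrow{M},\overrightarrow{N}]}(\varphi_l^{(k)})=0$ and $T^{[\overrightarrow{M},\overrightarrow{N}]}(f_i^{((\alpha+1)k)})=0$ for $\alpha\neq M_i$, together with their adjoint counterparts for $\psi_p$ and for $g_i^{((\beta+1)k)}$ with $\beta\neq N_i$; the sign $-(-1)^k$ comes from $(\pa^k)^*=(-1)^k\pa^k$.

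Your worry (b) is milder than you think, and no reordering via commutativity is needed. Within the chain, the factor $T_D(f_j^{[j-1]})$ already annihilates $f_j^{[j-1]}=T^{[\overrightarrow{j-1},\overrightarrow{0}]}(f_j)$. Dually, $\pa^{-1}T_I(g_j^{[M+j-1]})^{-1*}\pa=T_D(g_j^{[M+j-1]})$ annihilates $g_j^{[M+j-1]}$. Equivalently, in the determinant representation the relevant determinant has two equal columns.
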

\begin{proof}
First by Proposition 6,
\begin{equation*}
  {
  \begin{aligned}
&(T^{[\overrightarrow{M},\overrightarrow{N}]}\cdot\pa^k\cdot T^{[\overrightarrow{M},\overrightarrow{N}]-1})_{<1}=\\
&-\sum_{i=1}^{s}  \sum_{l=0}^{M_i}  T^{[\overrightarrow{M},\overrightarrow{N}]} (f_i^{((l+1)k)})\cdot \pa^{-1} \cdot \biggl( T^{[\overrightarrow{M}\setminus \{M[i]-M_i+l\},\overrightarrow{N}]}(f_i^{(l k)})\biggr)^{-1}\cdot \pa  \\
&- \sum_{l=1}^{K}  T^{[\overrightarrow{M},\overrightarrow{N}]} \pa^k  (\varphi_l)\cdot \pa^{-1} \cdot \left( T^{[\overrightarrow{M}\setminus \{M[s]+l\},\overrightarrow{N}]} (\varphi_l)\right)^{-1}\cdot \pa \\
&- (-1)^k\sum_{i=1}^{a}  \sum_{p=0}^{N_i} \Biggl( \left(  \pa^{-1} T^{[\overrightarrow{M},\overrightarrow{N}\setminus \{N[i]-N_i+p\}]-1*}  \pa  \right) (g_i^{(p k)}) \Biggr)^{-1} \cdot \pa^{-1} \cdot
\left(\pa^{-1}T^{[\overrightarrow{M},\overrightarrow{N}]-1*}\pa\right)(g_i^{((p+1) k)})\cdot\pa\\
&-(-1)^k\sum_{p=1}^{Q}\Biggl(\left(  \pa^{-1}  T^{[\overrightarrow{M},\overrightarrow{N}\setminus \{N[a]+p\}]-1*}  \pa  \right)(\psi_p)\Biggr)^{-1} \cdot \pa^{-1} \cdot \left(\pa^{-1}T^{[\overrightarrow{M},\overrightarrow{N}]-1*}\pa^{k+1}\right)(\psi_p)\cdot \pa.
\end{aligned}
  }\label{LMNKnew}
  \end{equation*}
Note that
 $$T^{[\overrightarrow{M},\overrightarrow{N}]} (f_i^{((\alpha+1)k)})=0 \ (\alpha\neq M_i),\quad \left(\pa^{-1}T^{[\overrightarrow{M},\overrightarrow{N}]-1*}\pa\right)(g_i^{((\beta+1) k)})=0 \ (\beta\neq N_i),$$
and
$$ T^{[\overrightarrow{M},\overrightarrow{N}]}(\varphi_l^{(k)})=0,
\quad \left(\pa^{-1}T^{[\overrightarrow{M},\overrightarrow{N}]-1*}\pa^{k+1}\right)(\psi_p)=0,$$
then \eqref{LMNKnew1} holds.
\end{proof}
\subsection{Proof of Theorem 2}
Next let us proceed to \textbf{prove Theorem 2}. For this, let us set
\begin{align*}
  L^{[M+N]} &= T^{[\overrightarrow{M},\overrightarrow{N}]}\cdot \pa^k\cdot T^{[\overrightarrow{M},\overrightarrow{N}]-1}, \\
  q_i^{[M+N]} &=
\begin{cases}
-T^{[\overrightarrow{M},\overrightarrow{N}]} (f_i^{((M_i+1)k)}), & 1 \leq i \leq s, \\
\Biggl( \left(  \pa^{-1}  T^{[\overrightarrow{M},\overrightarrow{N}\setminus \{N[i-s]\}]-1*} \pa  \right) (g_{i-s}^{(N_{i-s}k)}) \Biggr)^{-1}, & s+1 \leq i \leq m,
\end{cases}\\
r_i^{[M+N]} &=
\begin{cases}
\biggl( T^{[\overrightarrow{M}\setminus \{M[i]\},\overrightarrow{N}]}(f_i^{(M_i k)})\biggr)^{-1}, & 1 \leq i \leq s, \\
(-1)^{k+1}\left(\pa^{-1}T^{[\overrightarrow{M},\overrightarrow{N}]-1*}\pa\right)(g_{i-s}^{((N_{i-s}+1) k)}), & s+1 \leq i \leq m.
\end{cases}
\end{align*}
Then by the determinant representations of $T^{[M,N]}$ in Appendix, we can know:\\
$\bullet ~ when ~ 1 \leq i \leq s, $
\begin{align*}
q_i^{[M+N]} &=
\begin{cases}
\displaystyle
-\frac{IW_{N,M+1}\big(\mathfrak{r}_{\overrightarrow{N},x}; \mathfrak{q}_{\overleftarrow{M}}, f_i^{((M_i+1)k)}\big)}
     {IW_{N,M+1}\big(\mathfrak{r}_{\overrightarrow{N},x}; \mathfrak{q}_{\overleftarrow{M}},1\big)}, & M\geq N, \\
\displaystyle
(-1)^{M+1}\frac{ IW_{M+1,N}\big(\mathfrak{q}_{\overrightarrow{M}}, f_i^{((M_i+1)k)}; \mathfrak{r}_{\overleftarrow{N},x}\big)}
     {IW_{M+1,N}\big(1,\mathfrak{q}_{\overrightarrow{M}};\mathfrak{r}_{\overleftarrow{N},x}\big)}, & M< N,
\end{cases}
\end{align*}
\begin{align*}
r_i^{[M+N]} &=
\begin{cases}
\displaystyle
(-1)^{M[i]+M}\frac{IW_{N,M}\big(\mathfrak{r}_{\overrightarrow{N},x}; \mathfrak{q}_{\overleftarrow{M}\setminus \{M[i]\}},1 \big)}
     {IW_{N,M}\big(\mathfrak{r}_{\overrightarrow{N},x}; \mathfrak{q}_{\overleftarrow{M}}\big)}, & M> N, \\
\displaystyle
(-1)^{M[i]+M}\frac{ IW_{M,N}\big(1,\mathfrak{q}_{\overrightarrow{M}\setminus \{M[i]\}}; \mathfrak{r}_{\overleftarrow{N},x}\big)}
     {IW_{M,N}\big(\mathfrak{q}_{\overrightarrow{M}};\mathfrak{r}_{\overleftarrow{N},x}\big)}, & M\leq N,
\end{cases}
\end{align*}
$\bullet ~ when ~ s+1 \leq i \leq m, $
\begin{align*}
q_i^{[M+N]} &=
\begin{cases}
\displaystyle
(-1)^{N[i-s]+N}\frac{ IW_{N,M}\big(1,\mathfrak{r}_{\overrightarrow{N}\setminus\{N[i-s]\}}; \mathfrak{q}_{\overleftarrow{M},x}\big)}
     {IW_{N,M}\big(\mathfrak{r}_{\overrightarrow{N}}; \mathfrak{q}_{\overleftarrow{M},x}\big)}, & M\geq N, \\
\displaystyle
(-1)^{N[i-s]+N}\frac{ IW_{M,N}\big(\mathfrak{q}_{\overrightarrow{M},x}; \mathfrak{r}_{\overleftarrow{N}\setminus\{N[i-s]\} },1\big)}
     {IW_{M,N}\big(\mathfrak{q}_{\overrightarrow{M},x};\mathfrak{r}_{\overleftarrow{N}}\big)}, & M< N,
\end{cases}
\end{align*}
\begin{align*}
r_i^{[M+N]} &=
\begin{cases}
\displaystyle
(-1)^{k+M}\frac{IW_{N+1,M}\big(\mathfrak{r}_{\overrightarrow{N}},g_{i-s}^{((N_{i-s}+1)k)}; \mathfrak{q}_{\overleftarrow{M},x}\big) }
     {IW_{N+1,M}\big(1,\mathfrak{r}_{\overrightarrow{N}}; \mathfrak{q}_{\overleftarrow{M},x}\big)}, & M> N, \\
\displaystyle
(-1)^{k+1}\frac{ IW_{M,N+1}\big(\mathfrak{q}_{\overrightarrow{M},x}; \mathfrak{r}_{\overleftarrow{N}},g_{i-s}^{((N_{i-s}+1)k)}\big)}
     {IW_{M,N+1}\big(\mathfrak{q}_{\overrightarrow{M},x};\mathfrak{r}_{\overleftarrow{N}},1\big)}, & M\leq N,
\end{cases}
\end{align*}

By Corollary 2, we have
\begin{align}\label{L[M+N]k}
(L^{[M+N]k})_{\leq 0} = \sum_{i=1}^m q_i^{[M+N]} \pa^{-1} r_i^{[M+N]} \pa.
\end{align}
Further since $L=\pa$ is the 1st mKP Lax operator, and $T^{[\overrightarrow{M},\overrightarrow{N}]}$ is the corresponding Darboux operator, therefore the following three equations hold,
\begin{align}
&(L^{[M+N]})_{t_n} = \left[ (L^{[M+N]n})_{\geq 1}, L^{[M+N]} \right], \\
&q_{i,t_n}^{[M+N]} = \left(L^{[M+N]n}\right)_{\geq 1} (q_i^{[M+N]}), \\
&r_{i,t_n}^{[M+N]} = -\left(\pa^{-1} (L^{[M+N]n})_{\geq 1}^* \pa \right)( r_i^{[M+N]} ).\label{ri,tn[M+N]}
\end{align}
\eqref{L[M+N]k}-\eqref{ri,tn[M+N]} are precisely the complete characterization of the $(k,m)$-constrained mKP hierarchy. For $L=\pa$ is the 1st mKP Lax operator, its corresponding tau functions are $\tau_0=\tau_1=1.$ So by \hspace{-0.1pt}{\cite[Proposition 5.2]{cheng2018}} and Corollary 1, we know
the tau pair $(\tau_0^{[M+N]},\tau_1^{[M+N]})$ is the tau function corresponding to $L^{[M+N]}$, where $\tau_0^{[M+N]}$ and
$ \tau_1^{[M+N]}$ are given by
\begin{align*}
   \bullet ~ when ~ M \geq N, \nonumber \\
   \tau_0^{[M+N]} =& IW_{N,M}(\mathfrak{r}_{\overrightarrow{N},x};\ \mathfrak{q}_{\overleftarrow{M}}) ,\
    \tau_1^{[M+N]} = IW_{N,M+1}(\mathfrak{r}_{\overrightarrow{N},x};\ \mathfrak{q}_{\overleftarrow{M}},1), \\
  \bullet ~ when ~ M < N, \nonumber \\
   \tau_0^{[M+N]} =& IW_{M,N}(\mathfrak{q}_{\overrightarrow{M}};\ \mathfrak{r}_{\overleftarrow{N},x}),\
   \tau_1^{[M+N]} = (-1)^MIW_{M+1,N}(1,\mathfrak{q}_{\overrightarrow{M}};\ \mathfrak{r}_{\overleftarrow{N},x}).
  \end{align*}
Next set
$\tau_{2,i}^{[M+N]} = q_i^{[M+N]} \tau_1^{[M+N]}, \ \tau_{-1,i}^{[M+N]} = r_i^{[M+N]} \tau_0^{[M+N]}$, we have the expressions of $\tau_{2,i}^{[M+N]}$ and $\tau_{-1,i}^{[M+N]}$ given in Theorem 2.
Next by \hspace{-0.1pt}{\cite[Proposition 8]{wu2022MMMA}}, we can find $\tau_0^{[M+N]}$,\ $\tau_1^{[M+N]}$,\ $\tau_{2,i}^{[M+N]}$ and $\tau_{-1,i}^{[M+N]}$ satisfy the $(k,m)$-constrained mKP bilinear equations \eqref{kmcmkp1}~-~\eqref{kmcmkp3}. Thus, we complete $\textbf{the proof of Theorem 2}$.

\section{Examples of (k,m)-constrained mKP hierarchy}\label{section4}
For the $(k,m)$-constrained mKP hierarchy \eqref{mKPLK<1} and \eqref{mKPeigennew}, let us assume
\begin{align}\label{LKall}
L^k=\pa^k + \sum_{i=1}^{k-1} V_i \pa^i + \sum_{i=1}^m q_i \pa^{-1} r_i \pa,
\end{align}
where $V_i$ $(1\leq i\leq k-1)$ are differential polynomials of $v_j$  $(j\geq 0)$.
\subsection{The (1,m)-constrained mKP hierarchy}
When $k=1$, \eqref{LKall} will become into
\begin{align}\label{(1m)cmKP}
L=\pa+\sum_{i=1}^{m}q_i\pa^{-1}r_i\pa.
\end{align}
Compared with the mKP Lax operator $L=\pa+\sum_{i=0}^{+\infty}v_i\pa^{-i}$,\ we can get
\begin{align*}
v_0=\sum_{i=1}^{m}q_ir_i,\quad v_1=-\sum_{i=1}^{m}q_ir_{i,x},\quad v_2=\sum_{i=1}^{m}q_ir_{i,xx},\ \ldots.
\end{align*}
Then by \eqref{Lax equation},\ we can obtain the following examples.
\begin{example}
$q_i$ and $r_i$ satisfy the following equations $(1\leq i \leq m)$
\begin{align*}
q_{i,t_2}=q_{i,xx}+2q_{i,x}\sum_{j=1}^{m}q_jr_j,\quad
r_{i,t_2}=-r_{i,xx}+2r_{i,x}\sum_{j=1}^{m}q_jr_j.
\end{align*}
In particular, when $m=1$, we denote $q\triangleq q_1$ and $r\triangleq r_1$, then
\begin{align}\label{qt2rt2}
q_{t_2}=q_{xx}+2qq_xr,\quad r_{t_2}=-r_{xx}+2qrr_x.
\end{align}

If we choose $f=e^{x+t_2}+e^{2x+4t_2}+e^{3x+9t_2}$, and take $K=0,s=1,M_1=1,a=0,Q=0$ in \textbf{Theorem 2}, then $M=2$ and $N=0$, and we have
\begin{align*}
&\tau_0^{[2]} =
\begin{vmatrix}
f & f_x \\
f_x & f_{xx}
\end{vmatrix}=e^{3x+5t_2} + 4e^{4x+10t_2} + e^{5x+13t_2},\\
&\tau_1^{[2]} =
\begin{vmatrix}
f & f_x & 1 \\
f_x & f_{xx} & 0 \\
f_{xx} & f_{xxx} & 0
\end{vmatrix}= 2e^{3x+5t_2} + 12e^{4x+10t_2} + 6e^{5x+13t_2},\\
&\tau_{-1}^{[2]} =-
\begin{vmatrix}
f & 1 \\
f_x & 0
\end{vmatrix}=e^{x+t_2}+2e^{2x+4t_2}+3e^{3x+9t_2},\\
&\tau_{2}^{[2]} =
\begin{vmatrix}
f & f_x & f_{xx} \\
f_x & f_{xx} & f_{xxx} \\
f_{xx} & f_{xxx} & f_{xxxx}
\end{vmatrix}=4e^{6x+14t_2}.
\end{align*}
So by \eqref{tau2i-1i}, we can obtain the solutions of \eqref{qt2rt2},
\begin{align*}
q=&\frac{\tau_2^{[2]}}{\tau_1^{[2]}}=\frac{2e^{3x+9t_2}}{1+6e^{x+5t_2}+3e^{2x+8t_2} }, \\
r=&\frac{\tau_{-1}^{[2]}}{\tau_0^{[2]}}=\frac{e^{-2x-4t_2}(1+2e^{x+3t_2}+3e^{2x+8t_2})}{1+4e^{x+5t_2}+e^{2x+8t_2} }.
\end{align*}
\end{example}

\begin{example}
$q_i$ and $r_i$ also satisfy the following equations $(1\leq i \leq m)$
\begin{align*}
&q_{i,t_3}=q_{i,xxx} + 3q_{i,xx}\sum_{j=1}^m q_j r_j  + 3q_{i,x}\sum_{j=1}^m q_{j,x} r_j  + 3q_{i,x}\left(\sum_{j=1}^m q_j r_j\right)^2 ,\\
&r_{i,t_3}=r_{i,xxx} - 3r_{i,xx}\sum_{j=1}^m q_j r_j  - 3r_{i,x}\sum_{j=1}^m q_j r_{j,x}  + 3r_{i,x}\left(\sum_{j=1}^m q_j r_j\right)^2.
\end{align*}
In particular, when $m=1$,
\begin{align}
q_{t_3}&=q_{xxx} + 3qq_{xx}r + 3q_{x}^2r + 3q_{x}(qr)^2 ,\label{qt3}\\
r_{t_3}&=r_{xxx} - 3qrr_{xx} - 3qr_{x}^2  + 3r_{x}(qr)^2.\label{rt3}
\end{align}
Similarly if we choose $f=\frac{1}{24}x^4+xt_3,\ \psi=e^{-2x-8t_3}$, and take $K=0, s=1, M_1=2, Q=1, a=0, N_1=0$ in \textbf{Theorem 2}, then $M=3,\ N=1$, and we have
\begin{align*}
&\tau_0^{[4]} =- \frac{e^{-2x - 8t_3} (x^3 - 12 t_3) (x^3 + 3 x^2 + 3 x- 12 t_3)}{576},\\
&\tau_1^{[4]} = -\frac{e^{-2x - 8t_3} (2x^3 + 3x^2 + 12t_3)}{96},\\
&\tau_{-1}^{[4]} =\frac{e^{-2x - 8t_3} (2x^4+4x^3+3x^2-24xt_3-12t_3)}{96},\\
&\tau_{2}^{[4]} =- \frac{e^{-2x - 8t_3} (x^4 + 2x^3 + 24xt_3 + 12t_3)}{192}.
\end{align*}
So by \eqref{tau2i-1i}, we can obtain the solutions of \eqref{qt3} and \eqref{rt3},
\begin{align*}
  q=&\frac{\tau_2^{[4]}}{\tau_1^{[4]}}=\frac{x^4+2x^3+24xt_3+12t_3}{2(2x^3+3x^2+12t_3)}, \\
  r=&\frac{\tau_{-1}^{[4]}}{\tau_0^{[4]}}=\frac{-6(2x^4+4x^3+3x^2-24xt_3-12t_3)}{(x^3-12t_3)(x^3+3x^2+3x-12t_3)}.
\end{align*}
\end{example}

\subsection{The (2,2)-constrained mKP hierarchy}
When $k=2,\ m=2$, the mKP Lax operator satisfies
\begin{align}
L^2=\pa^2 + V \pa +  q_1 \pa^{-1} r_1 \pa+q_2 \pa^{-1} r_2 \pa.
\end{align}
So we have
\begin{align*}
v_0=&\frac{1}{2}V,\quad v_1=\frac{1}{2}(q_1r_1+q_2r_2-\frac{1}{4}V^2-\frac{1}{2}V_{x}),\\
v_2 =& -\frac{3}{4} q_1r_{1,x}-q_2r_{2,x}-\frac{1}{4}q_{1,x}r_1-\frac{1}{4}q_{2,x}r_2 + \frac{1}{8}V_{xx}\\
&+\frac{1}{4}VV_x -\frac{1}{4}V(q_1r_1+q_2r_2-\frac{1}{4}V^2),\ldots,
\end{align*}
and in terms of tau functions $\tau_i(-1\leq i\leq 2)$, we have
\begin{align*}
&q_1(t)=\frac{\tau_{2,1}(t)}{\tau_1(t)},\quad r_1(t)=\frac{\tau_{-1,1}(t)}{\tau_0(t)},\\
&q_2(t)=\frac{\tau_{2,2}(t)}{\tau_1(t)},\quad r_2(t)=\frac{\tau_{-1,2}(t)}{\tau_0(t)}, \\ &V=2\left(\log\biggl(\frac{\tau_1(t)}{\tau_0(t)}\biggr)\right)_x.
\end{align*}
Then we can obtain the following examples.
\begin{example}
$q_i,r_i$ and $V$ satisfy the following equations $(1\leq i \leq m)$
\begin{align}
q_{1,t_2}=&q_{1,xx}+Vq_{1,x},\quad
r_{1,t_2}=-r_{1,xx}+Vr_{1,x},\label{q1r1t2}\\
q_{2,t_2}=&q_{2,xx}+Vq_{2,x},\quad
r_{2,t_2}=-r_{2,xx}+Vr_{2,x},\label{q2r2t2}\\
V_{t_2}=&2q_1r_{1,x}+2q_2r_{2,x}+2q_{1,x}r_1+2q_{2,x}r_2;\label{Vt2}
\end{align}
Similarly if we choose $f=e^{x+t_2}+e^{2x+4t_2}$ and $g=e^{-3x-9t_2}+e^{-4x-16t_2}$, and set $s=1, K=M_1=0, a=1, Q=N_1=0$ in \textbf{Theorem 2}, then $M=N=1$, and we can get
\begin{align*}
&\tau_0^{[2]} =\int fg_xdx=\frac{3}{2} e^{-2x - 8t_2} + 3 e^{-x - 5t_2} + \frac{4}{3} e^{-3x - 15t_2} + 2 e^{-2x - 12t_2},\\
&\tau_1^{[2]} =
\begin{vmatrix}
\int fg_x dx & g \\
f & 1
\end{vmatrix}=\frac{1}{6} e^{-3x-15t_2}( 3 e^{x+7t_2} + 12 e^{2x+10t_2} + 6 e^{x+3t_2}+2 ),\\
&\tau_{-1,1}^{[2]} =g=e^{-3x-9t_2}+e^{-4x-16t_2},\\
&\tau_{2,1}^{[2]} =-
\begin{vmatrix}
\int fg_x dx & \int f_{xx}g_x dx \\
f & f_{xx}
\end{vmatrix}= \frac{1}{2}e^{-3x-15t_2} ( 9 e^{3x+11t_2} + 4 e^{2x+4t_2} ),\\
&\tau_{-1,2}^{[2]} =-\frac{\begin{vmatrix}
\int f_xg dx & \int f_xg_{xx} dx \\
g & g_{xx}
\end{vmatrix}\cdot\int fg_x dx}{
\begin{vmatrix}
\int f_xg dx &  f \\
g & 1
\end{vmatrix}}\\
&\quad  \quad=-\frac{7}{6} \frac{\left(8 e^{-3 x-15 t_2}+9 e^{-2 x-8 t_2}+18 e^{-x-5 t_2}+12 e^{-2 x-12 t_2}\right)\left(6 e^{-2 x-6 t_2}+e^{-3 x-9 t_2}\right)}{8+12 e^{x+3 t_2}+9 e^{x+7 t_2}+18 e^{2 x+10 t_2}},\\
&\tau_{2,2}^{[2]} =
\frac{f\cdot \begin{vmatrix}
\int fg_x dx & g \\
f & 1
\end{vmatrix}}{\int f_xgdx}
=-e^{x+t_2}-e^{2x+4t_2}.
\end{align*}
So by \eqref{tau2i-1i}, we can obtain the solutions of \eqref{q1r1t2}-\eqref{Vt2},
\begin{align*}
  q_1&=\frac{\tau_{2,1}^{[2]}}{\tau_1^{[2]}}=\frac{3e^{-x-5t_2}(48e^{-3x-15t_2}+16e^{-4x-22t_2}+27e^{-2x-8t_2})}
{36e^{-4x-14t_2}+66e^{-5x-21t_2}+9e^{-5x-17t_2}+18e^{-6x-24t_2}+24e^{-6x-28t_2}+8e^{-7x-31t_2}},\\
r_1&=\frac{\tau_{-1,1}^{[2]}}{\tau_0^{[2]}}=-\frac{7(3e^{-5x-17t_2}+18e^{-4x-14t_2}+4e^{-6x-24t_2}+24e^{-5x-21t_2})}
{(3e^{x+7t_2}+4)(8e^{-3x-15t_2}+9e^{-2x-8t_2}+18e^{-x-5t_2}+12e^{-2x-12t_2})},\\
q_2&=\frac{\tau_{2,2}^{[2]}}{\tau_1^{[2]}}=-\frac{6(4e^{-3x-15t_2}+3e^{-x-5t_2}+3e^{-2x-8t_2}+4e^{-2x-12t_2})}
{36e^{-4x-14t_2}+66e^{-5x-21t_2}+9e^{-5x-17t_2}+18e^{-6x-24t_2}+24e^{-6x-28t_2}+8e^{-7x-31t_2}}, \\
r_2&=\frac{\tau_{-1,2}^{[2]}}{\tau_0^{[2]}}=\frac{6(e^{-3x-9t_2}+e^{-4x-16t_2})}
{8e^{-3x-15t_2}+9e^{-2x-8t_2}+18e^{-x-5t_2}+12e^{-2x-12t_2}},\\
V&=\frac{A(x,t_2)}{B(x,t_2)},
\end{align*}
where $A(x,t_2)=12( 81 e^{-9x-31t_2}+144 e^{-10x-42t_2}+96 e^{-11x-49t_2}+54 e^{-9x-35t_2}+16 e^{-13x-55t_2}+9 e^{-11x-41t_2}+396 e^{-10x-38t_2}+64 e^{-13x-59t_2}+416 e^{-12x-52t_2}+660 e^{-11x-45t_2}+24 e^{-12x-48t_2} )$ and $B(x,t_2)=( 8 e^{-3x-15t_2}+9 e^{-2x-8t_2}+18 e^{-x-5t_2}+12 e^{-2x-12t_2} )
( 3 e^{-3x-9t_2}+4 e^{-4x-16t_2} )( 36 e^{-4x-14t_2}+66 e^{-5x-21t_2}+9 e^{-5x-17t_2}+18 e^{-6x-24t_2}+24 e^{-6x-28t_2}+8 e^{-7x-31t_2} )$.
\end{example}
\section{Conclusions and discussions}\label{section5}
In this paper, the generalized Wronskian solutions of the $(k,m)$-constrained mKP hierarchy are investigated. We first give an elementary transformation of the mKP tau pair in Theorem \ref{th1}, then we give more general forms of the mKP tau pairs with generalized Wronskians using the mKP Darboux transformations, as given in Corollary \ref{corollary1}. Finally starting from $L=\partial$, the generalized Wronskian solutions for the $(k,m)$-constrained mKP hierarchy are obtained to satisfy the corresponding bilinear equations \eqref{kmcmkp1}-\eqref{kmcmkp3} in Theorem \ref{th2}. Here are the following remarks on our results.
\begin{itemize}
\item Corollary \ref{corollary1} is related to the addition formulas of the mKP hierarchy.  In fact for the mKP wave function $w(t,z)$ and the mKP adjoint wave function $w^*(t,z)$, one has the following two important relations
\begin{align}
w(t,z)_x= z \frac{\tau_0(t)\tau_1(t-[z^{-1}])}{\tau_1^2(t)} e^{\xi(t,z)},\quad w^{*}(t,z)_x= -\frac{\tau_0(t+[z^{-1}])\tau_1(t)}{\tau_0^2(t)} e^{-\xi(t,z)}.
\end{align}
which can be found in \cite{cheng2018kp}. So if we set $q(t)=w(t,z)$ and $r(t)=w^*(t,z)$ in Theorem \ref{th1}, then we can find
\begin{align*}
X(t,z)\tau_0(t)&=w(t,z)\tau_1(t),\quad X(t,z)\tau_1(t)=z^{-1}w(t,z)_x\frac{\tau_1(t)^2}{\tau_0(t)},\\
X^*(t,z)\tau_0(t)&=-w^{*}(t,z)_x\frac{\tau_0(t)^2}{\tau_1(t)},\quad X^*(t,z)\tau_1(t)=zw^*(t,z)\tau_0(t),
\end{align*}
where $X(t,z)$ and $X^*(t,z)$ are the vertex operators defined by
$$X(t,z)=e^{\xi(t,z)}e^{-\xi(\widetilde{\partial}_t,z^{-1})},\quad X^*(t,z)=e^{-\xi(t,z)}e^{\xi(\widetilde{\partial}_t,z^{-1})},$$
And $(X(t,z)\tau_0(t), X(t,z)\tau_1(t))$ and $(X^*(t,z)\tau_0(t), X^*(t,z)\tau_1(t))$ are the new mKP tau pairs. 
Further if we set $q_i(t)=w(t,\lambda_i)$ ($1\leq i\leq M$) and $r_j(t)=w^*(t,\mu_j)$ ($1\leq j\leq N$) in Corollary \ref{corollary1}, then we can obtain the addition formulas of the 1st mKP hierarchy in the forms of determinant \cite{Shigyo2013}.

\item Though in \cite{Chen2020,Yangyi2020}, the $(k,m)$-constrained mKP hierarchy is investigated by the Darboux transformations, however there is no explicit solution, since they highly rely on the choice of the initial mKP eigenfunctions and adjoint eigenfunctions. Notice that when $m=1$, The initial Lax operator can be choose as $L^k=\partial^k+\partial^{-1}$, and it is usually very difficult to solve $q_t=(L^n)_{\geq 1}(q)$ and $r_t=-(\pa^{-1} (L^n)_{\geq 1}^* \pa)(r)$. Moreover for $m>1$, it will become more complicated to get the initial mKP eigenfunctions and the adjoint eigenfunctions.

\item If we set $f_i(t)$ and $g_j(t)$ to be some Schur functions $s_\lambda (t)={\rm det} (h_{\lambda_i-i+j}(t))_{1\leq i,j\leq l(\lambda)},\ e^{\xi(t,z)}=\sum_{j=0}^{+\infty}h_j(t)z^j$ in Theorem \ref{th2}, we can obtain the polynomial solutions for the constrained mKP hierarchy by the formula below \cite{Orlov2002},
$$\langle s - k | e^{H(t)} \psi^*_{-j_1} \cdots \psi^*_{-j_k} \psi_{i_s} \cdots \psi_{i_1} | 0 \rangle = (-1)^{j_1 + \cdots + j_k + (k-s)(k-s+1)/2} s_\lambda(t), $$
here $-j_1 < \cdots < -j_k < 0 \le i_s < \cdots < i_1$,\ $s - k \ge 0$, and the partition $\lambda = (n_1, \ldots, n_{s-k}, n_{s-k+1},$\\ $\ldots, n_{s-k+j_1})$ is defined by the pair of partitions:
\begin{align*}
&(n_1, \ldots, n_{s-k}) = \bigl( i_1 - (s - k) + 1,\, i_2 - (s - k) + 2,\, \ldots,\, i_{s-k} \bigr), \\
&(n_{s-k+1}, \ldots, n_{s-k+j_1}) = (i_{s-k+1}, \ldots, i_s \,|\, j_1 - 1, \ldots, j_k - 1). 
\end{align*}
\item By comparing the results of $q_i$ and $r_i$ $(1\leq i\leq m)$ in Theorem \ref{th2}, if we need $q_i=\bar{r}_j$ with $\bar{r}_j$ meaning the complex conjugate of $r_j$, which is needed in the vector derivative NLS equation, it will be interesting to investigate the additional conditions for $f_i$ and $g_j$, which will be considered in future.
\end{itemize}

\section{Appendix}
In this appendix, we will give the expressions of $T^{[\overrightarrow{M},\overrightarrow{N}]}$ and $\pa^{-1}{T}^{[\overrightarrow{M},\overrightarrow{N}]-1*} \pa $, where $T^{[\overrightarrow{M},\overrightarrow{N}]}=T^{[\overrightarrow{M},\overrightarrow{N}]}(f_{\overleftarrow{M}};g_{\overleftarrow{N},x})$ is given by \eqref{Tnkexpr}. Firstly the expressions of $T^{[\overrightarrow{M},\overrightarrow{N}]}(f_{\overleftarrow{M}};g_{\overleftarrow{N},x})$ are given by

$\bullet$ Case $M>N$,
\begin{align*}
&\frac{1}{I W_{N,M+1}(g_{\overrightarrow{N},x}; f_{\overleftarrow{M}},1)}\begin{vmatrix}
\Omega(f_1, g_{N,x}) & \cdots & \Omega(f_M, g_{N,x}) & \partial^{-1}g_{N,x} \\
\vdots & \ddots & \vdots & \vdots \\
\Omega(f_1, g_{1,x}) & \cdots & \Omega(f_M, g_{1,x}) & \partial^{-1}g_{1,x} \\
f_1 & \cdots & f_M & 1 \\
f_1^{(1)} & \cdots & f_M^{(1)} & \partial \\
\vdots & \ddots & \vdots & \vdots \\
f_1^{(M-N)} & \cdots & f_M^{(M-N)} & \partial^{M-N}
\end{vmatrix},
\end{align*}

$\bullet$ Case $M=N$,
\begin{align*}
 &\frac{1}{{I W_{M,M+1}(g_{\overrightarrow{M},x}; f_{\overleftarrow{M}},1)}}
\begin{vmatrix}
\Omega(f_1, g_{M,x} ) & \cdots & \Omega(f_M, g_{M,x}) & \partial^{-1} g_{M,x} \\
\vdots & \ddots & \vdots & \vdots \\
\Omega(f_1, g_{1,x}) & \cdots & \Omega(f_M, g_{1,x}) & \partial^{-1} g_{1,x} \\
f_1 & \cdots & f_M & 1
\end{vmatrix},
\end{align*}

$\bullet$ Case $M<N$,
\begin{align*}
\frac{(-1)^{N-1}}{IW_{M+1,N}(1, f_{\overrightarrow{M}}; g_{\overleftarrow{N},x})}
\begin{vmatrix}
 \Omega(f_M, g_{1,x})& \cdots & \Omega(f_1, g_{1,x}) & g_{1,x} & \cdots & g_1^{(N-M-1)} & \partial^{-1} g_{1,x} \\
 \Omega(f_M, g_{2,x}) & \cdots & \Omega(f_1, g_{2,x}) & g_{2,x} & \cdots & g_2^{(N-M-1)} & \partial^{-1} g_{2,x} \\
 \vdots & \ddots & \vdots & \vdots & & \vdots & \vdots \\
 \Omega(f_M, g_{N,x}) & \cdots & \Omega(f_1, g_{N,x}) & g_{N,x} & \cdots & g_N^{(N-M-1)} & \partial^{-1} g_{N,x} \\
\end{vmatrix},
\end{align*}
Then as for $\pa^{-1}{T}^{[\overrightarrow{M},\overrightarrow{N}]}(f_{\overleftarrow{M}};g_{\overleftarrow{N},x})^{-1*} \pa $, we need the following key relations between $T_D(f)$ and $T_I(g)$, that is,
\begin{align*}
\pa^{-1}T_D(f)^{-1*}\pa=T_I(f),\quad \pa^{-1}T_I(g)^{-1*}\pa=T_D(g).
\end{align*}
Thus we can find 
$$\pa^{-1}{T}^{[\overrightarrow{M},\overrightarrow{N}]}(f_{\overleftarrow{M}};g_{\overleftarrow{N},x})^{-1*} \pa=T^{[\overrightarrow{N},\overrightarrow{M}]}(g_{\overleftarrow{N},x};f_{\overleftarrow{M}}).$$
So we can easily obtain the corresponding determinant representations.

\noindent{\bf Acknowledgements}: \\
This work is supported by the National Natural Science Foundation of China
(Grant Nos. 12571271 and 12261072).\\

\noindent{\bf Conflict of Interest}: \\
The author has no conflicts to disclose.\\

\noindent{\bf Data availability}: \\
Data sharing is not applicable to this article as no new data were created or analyzed in this study.\\


\begin{thebibliography}{99}
\bibitem{Adler1999} Adler M and van Moerbeke P, Vertex operator solutions to the discrete KP hierarchy, Commun. Math. Phys. 203 (1999) 185-210.
\bibitem{Alexandrov2015} Alexandrov A, Open intersection numbers, matrix models and MKP hierarchy, J. High Energy Phys. 3 (2015) 1-14.
\bibitem{Cao2026} Cao W Q, Cheng J P and Wang J B, The two-component discrete KP hierarchy, Stud. Appl. Math. 156 (2026) 70213.
\bibitem{Chen20193} Chen H Z, Geng L M, Li N and Cheng J P, Solving the constrained modified KP hierarchy by gauge transformations, J. Nonlin. Math. Phys. 26 (2019) 54-68.
\bibitem{Chen2020} Chen H Z, Geng L M and Cheng J P, Solutions of the constrained mKP hierarchy by boson-fermion correspondence, J. Nonlin. Math. Phys. 27 (2020) 308-323.
\bibitem{cheng2018} Cheng J P, The gauge transformation of the modified KP hierarchy, J. Nonlin. Math. Phys. 25 (2018) 66-85.
\bibitem{cheng2018kp}
Cheng J P, Li M H and Tian K L, On the modified KP hierarchy: tau functions, squared eigenfunction symmetries and additional symmetries. J. Geom. Phys. 134 (2018) 19-37.
\bibitem{DJKM} Date E, Kashiwara M, Jimbo M and Miwa T, Transformation groups for soliton equations, in Nonlinear integrable systems classical theory and quantum theory (Kyoto, 1981), 39-119, World Sci. Publ, Singapore, 1983.
\bibitem{Dickey1999} Dickey L A, Modified KP and discrete KP, Lett. Math. Phys. 48 (1999) 277-289.
\bibitem{Dickey2003} Dickey L A, Soliton Equations and Hamiltonian Systems, World Sci. Publ, Singapore,  2003.
\bibitem{Harnad2021} Harnad J and Balogh F, Tau functions and their applications, Cambridge Monogr. Math. Phys., Cambridge Univ. Press, Cambridge, 2021.
\bibitem{He2002} He J S, Li Y S and Cheng Y, The determinant representation of the gauge transformation operators, Chinese Ann. Math. Ser. B 23 (2002) 475-486.
\bibitem{Jimbo} Jimbo M and Miwa T, Solitons and infinite dimensional Lie algebras, Publ. RIMS, Kyoto Univ. 19 (1983) 943-1001.
\bibitem{Kac2018} Kac V and van de Leur J, Equivalence of formulations of the MKP hierarchy and its polynomial tau-functions, Jpn. J. Math. 13 (2018) 235-271.
\bibitem{Kac2023} Kac V and van de Leur J, Multicomponent KP type hierarchies and their reductions, associated to conjugacy classes of Weyl groups of classical Lie algebras, J. Math. Phys. 64 (2023) 091702.
\bibitem{kiso1990} Kiso K, A remark on the commuting flows defined by Lax equations, Progr. Theoret. Phys. 83 (1990) 1108-1114.
\bibitem{Kun1995} Kundu A, Strampp W and Oevel W, Gauge transformations of constrained KP flows: new integrable hierarchies, J. Math. Phys. 36 (1995) 2972-2984.
\bibitem{kupershmidt1985}
Kupershmidt B A, Mathematics of dispersive water waves, Commun. Math. Phys. 99 (1985) 51-73.
\bibitem{Liu1995} Liu Q P, On the constrained modified KP hierarchy, Inverse Probl. 11 (1995) 205-209.
\bibitem{Mulase1994} Mulase M, Algebraic theory of the KP equations, Conf. Proc. Lecture Notes Math. Phys. 3 (1994) 151-217.
\bibitem{Oevel1999JP} Oevel W and Carillo S, Squared eigenfunction symmetries for soliton equations: Part II, J. Math. Anal. Appl. 217 (1998) 179-199.
\bibitem{Oevel1998JP} Oevel W and Rogers C, Gauge transformations and reciprocal links in 2+1 dimensions, Rev. Math. Phys. 5 (1993) 299-330.
\bibitem{Orlov2002} Orlov A Y, Tau Functions and Matrix Integrals, arXiv: math-ph/0210012v3, 2002.
\bibitem{Zabrodin2026} Prokofev V and Zabrodin A, Bilinear formalism for Schwarzian KP and Harry Dym hierarchies, arXiv:2605.00235.
\bibitem{Shaw1997} Shaw J C and Tu M H, Miura and auto-B$\ddot{a}$cklund transformations for the cKP and cmKP hierarchies, J. Math. Phys. 38 (1997) 5756-5773.
\bibitem{Shigyo2013} Shigyo Y, On addition formulae of KP, mKP and BKP hierarchies, SIGMA 9 (2013) 035.
\bibitem{vanMoerbeke1994} van Moerbeke P, Integrable foundations of string theory, Lectures on integrable systems, World Sci. Publ., River Edge, NJ, 1994, 163-267.
\bibitem{wang2026} Wang J B, Guan W C, Chen M Y and Cheng J P, One reduction of the modified Toda hierarchy, Anal. Math. Phys.  16 (2026) 50.
\bibitem{Willox2004} Willox R and Satsuma J, Sato theory and transformation groups. A unified approach to integrable systems, Lect. Notes Phys. 644 (2004) 17-55.
\bibitem{wu2022MMMA} Wu Y Q and Cheng J P, A new generalized constrained modified KP hierarchy, Math. Meth. Appl. Sci. 46 (2023) 3510-3521.
\bibitem{Yangyi2020} Yang Y and Cheng J P, The gauge transformations generated by the wave functions in the constrained modified KP hierarchy, Modern Phys. Lett. B 34 (2020) 2050205.
\bibitem{Yang2022} Yang Y and Cheng J P, Bilinear equations in Darboux transformations by Boson-Fermion correspondence, Phys. D 433 (2022)  133198.
\bibitem{Zabrodin2019} Zabrodin A V, Matrix Modified Kadomtsev-Petviashvili Hierarchy, Theor. Math. Phys. 199 (2019) 771-783.
\bibitem{Zabrodin225} Zabrodin A V, Revisiting B$\ddot{a}$cklund-Darboux transformations for KP and BKP integrable hierarchies, arXiv:2506.07208v1.
\end{thebibliography}
\end{document}